\newcommand{\response}[1]{\textcolor{black}{#1}}
\def\BState{\State\hskip-\ALG@thistlm}
\pgfplotsset{every tick label/.append style={font=\scriptsize}}
\pgfplotsset{every axis label/.append style={font=\scriptsize}}
\pgfplotsset{legend style={font=\scriptsize}}
\pgfplotsset{
	compat=1.11,
	legend image code/.code={
		\draw[mark repeat=2,mark phase=2]
		plot coordinates {
			(0cm,0cm)
			(0.25cm,0cm)        
			(0.5cm,0cm)         
		};%
	}
}
\DeclarePairedDelimiter\abs{\lvert}{\rvert}%
\DeclarePairedDelimiter\norm{\lVert}{\rVert}%
\theoremstyle{plain}
\newtheorem{lem}{Lemma}
\newtheorem{prop}{Proposition}
\newtheorem{rem}{Remark}
\newtheorem{defn}{Definition}
\theoremstyle{remark}
\title{Grid-Graph Signal Processing (Grid-GSP):\\
A Graph Signal Processing Framework\\ for the Power Grid}
\author{\IEEEauthorblockN{Raksha Ramakrishna} and \IEEEauthorblockN{Anna Scaglione}
	 \thanks{Work done at the School of Electrical, Computer and Energy Engineering (ECEE), Arizona State University, Tempe, AZ. This research was supported in part by the Director, Office of Electricity Delivery and Energy Reliability, 
	Cybersecurity for Energy Delivery Systems program, of the U.S. Department of Energy, under contract DOE0000780. Any
	opinions, and findings expressed in this material are those of the authors and
	do not necessarily reflect those of the sponsors
	Preliminary work was presented in  \cite{DSW2019,GlobalSIP2019}.}	\vspace{-0.7cm}}
\begin{document}
	%
	\maketitle
	\begin{abstract}
   The underlying theme of this paper is to explore the various facets of power systems data through the lens of graph signal processing (GSP), laying down the foundations of the Grid-GSP framework.  Grid-GSP provides an interpretation for the spatio-temporal properties of voltage phasor measurements, by showing how the well-known power systems modeling supports a generative low-pass graph filter model for the state variables, namely the voltage phasors. Using the model we formalize the empirical observation that voltage phasor measurement data lie in a low-dimensional subspace and tie their spatio-temporal structure to generator voltage dynamics.
   The Grid-GSP generative model is then successfully employed to investigate the problems pertaining to the grid of data sampling and interpolation, network inference, detection of anomalies and data compression. 
   Numerical results on a large synthetic grid that mimics the real-grid of the state of Texas, ACTIVSg2000, and on real-world measurements from ISO-New England verify the efficacy of applying Grid-GSP methods to electric grid data. 
	\end{abstract}
	\section{Introduction} \label{sec:introduction}	
	The power grid is one of the foremost examples of a large-scale man-made network. The nodes of the associated graph are the grid \textit{buses} and its edges are its \textit{transmission lines}. It is therefore natural to see measurements from the power grid as graph signals \cite{gridGSP_ortega} and model power grid measurements using tools from the theory of graph signal processing (GSP) whose goal is to extend fundamental insights that come from the frequency analysis for time series to the domain of signals indexed by graphs \cite{gridGSP_ortega,GFDesign,GSP_Moura}. 
	One of the factors that motivate the development of GSP for the power grid is the  abundance of high-quality data that can be acquired using  phasor measurement units (PMU)s, the sensors producing estimates of the voltage and current phasors  \cite{phadke2006history}. 
	With that, classical signal processing questions pertaining to sampling, interpolation, denoising and compression and questions that hinge on the underlying structure of the voltage phasors graph signal arise. 
 \par The overarching goal of this paper is to develop GSP based models for power systems from first principles by building upon the existing system-level knowledge of power systems to create a solid foundation to analyze power-grid measurements using tools from GSP. We call this the Grid-GSP framework.
 By identifying the correct graph shift operators (GSO), we  extend well-known results in GSP to power system data without losing the associated  physical interpretation. 
 
 \par The core idea is to rewrite the differential algebraic equations (DAE) \cite{GraphPowerSystem}, in a way often done in transient stability analysis of power systems,  to reveal that the inherent structure in voltage phasors can be explained using a linear low-pass graph filter as a generative model, \response{whose inputs are the generator voltages. This input signal is the generators' response to electric load in the grid.
 Through this model the paper shows also that the temporal dynamics of the input signal, i.e. the generator voltages, can be explained using a non-linear GSP model defined via another GSO derived from the  generator-only Kron-reduced network. This is done utilizing the well-known  classical swing equations \cite{Dorfler2013,kron_reduction_power_network}.  }
 \response{This spatio-temporal generative model supports the empirical observation that voltage data obtained using PMUs tend to be confined to a much smaller dimension compared to the size of the data record in both space and time \cite{Xie2014,LowrankMissingData}}. 
  Many papers have leveraged the empirical observation of the low-rank of phasor data for the interpolation of missing data  \cite{LowrankMissingData}, correcting bad data \cite{Wang2017} and to detect faulty events \cite{AnomalyDetection,RealTimeIdentification,Xie2014,Kim2015}.  Importantly, our framework explicitly puts forth the structure of this low-dimensional subspace using our GSP-based generative model, directly tying this subspace to the graph Fourier domain of the GSO. 

\subsection{Literature review} 
We review prior works by dividing the most relevant literature into three categories: 1) a general survey of works that use concepts from graph theory and GSP in power systems in the areas of sensor placement, interpolation and network inference, 2) FDI  attack detection and 3) literature pertaining to data compression for PMU data. 
\par \underline{\it Graph theory for power systems}: Several papers have used insights from spectral and algebraic graph theory. A few applications include optimal placement \cite{PMUplacement2010,Pal2014517} and generating statistically accurate topologies \cite{Wang2010}. 
Grid topology identification is a network inference problem and has been studied by several works such as in \cite{li2013blind,Grotas2019,Xiang2019,Deka2020,talukdar2020physics}.
 GSP concepts have been leveraged in \cite{GlobalSIP2018,Drayer2019} to detect FDI attacks.
Prior work in \cite{JSAC} dealt with performance limits on fault localization with  inadequate number of PMUs and connected it with graph signal sampling theory and optimal placement of PMUs for best possible resolution of fault localization in this under-sampled regime. 
\par The Kron-reduced network among the generator buses and the associated properties are used in \cite{InterAreaOsc,LocalOscHuang} to detect low-frequency oscillations as well as the resulting islanding patterns. In \cite{GraphLaplacian}, the authors have shed light on the relationship that exists between graph Laplacian and modes in power systems. 
Recently, a comprehensive review of graph-theoretical concepts in power systems is presented in \cite{GraphPowerSystem}. 
\par Additionally, there have been several papers adopting graphical models for state estimation \cite{weng2013graphical}, topology estimation \cite{deka2019topology} and optimal power flow \cite{dvijotham2017graphical}. While the modeling approach is valid, the graphical models capture correlation whereas our method models the underlying cause for that correlation structure thereby opening the door for statistical and non-statistical approaches. 

\par Note that with the exception of \cite{GlobalSIP2018,Drayer2019}, no other papers make the connection with GSP and even in the aforementioned work, GSP is used in an empirical manner. On the other hand, in our preliminary work in \cite{DSW2019,GlobalSIP2019}, established a case for GSP in a more rigorous manner.\\

\par \underline{\it Detection of FDI attacks}:
While anomaly detection can be broadly applied to identify various events, a large body of prior research has focused on False Data Injection (FDI) attacks that can bypass classical bad data detection (BDD) mechanisms \cite{bobba2010detecting,dan2010stealth,kosut2011malicious,liang2017review}  and trigger incorrect decisions or hide line overflows and contingencies \cite{Esnaola2016,Zhang2017,he2017real}). 
FDI attacks to PMUs can, for instance, exploit the vulnerability of GPS signals to spoofing attacks \cite{shepard2012evaluation,heng2014reliable}. \\
Recent papers on PMU data integrity have proposed leveraging the low rank spatio-temporal nature of PMU data not only to help with erasures but also to strengthen conventional BDD mechanisms \cite{Gao2016_PMUIdentification}. In \cite{Zhang2017} the authors have suggested an FDI attack strategy that can pass the aforementioned BDD approach in \cite{Gao2016_PMUIdentification} by generating false samples that approximately  preserve the original  subspace structure. \\

\par \underline{\it Compression of PMU data}: Due to their relatively high sampling rate and their wide deployment, PMU data have have called for compression. The performance of several off-the-shelf lossless encoding techniques applied to PMU data were investigated in \cite{Top2013}. 
In \cite{Klump2010}, a lossless compression called slack referenced encoding (SRE) method of PMU voltage measurements is introduced, by identifying a slack-bus and differentially encoding the difference between slack-bus measurements in time and all other buses. Similarly, in \cite{Tate2016}, phasor angle data is encoded in a lossless manner by preprocessing data using techniques from \cite{Klump2010} and then using Golomb-Rice entropy encoding.\\
The idea of slow-variation with time in PMU data  such as phase angles is used in \cite{KirtiHICCS} to firstly transform data into frequency domain (in time) and then using a `reverse water-filling' technique to encode frequency components in the difference measurements. 
Many lossy compression techniques utilize the low-rank structure inherent in voltage phasor data \cite{Gadde2016,Ge2015,DataCompressionSVD}. Several other wavelet based compression algorithms exist in the literature as well \cite{Mehra2013}. 
\subsection{Contributions}
The aim of this paper is to establish the framework of Grid-GSP and elucidate properties of power grid signals using tools from GSP. 
In particular, we: 
\begin{enumerate}
\item Establish that PMU voltage measurements from the power grid are result of an excitation to a low-pass graph filter whose graph shift operator (GSO) is defined using a function of the system admittance matrix.  
This was partly explored in our previous work in \cite{DSW2019,GlobalSIP2019} and used for blind community detection. 
\item Study the spatio-temporal structure of the excitation that, at a fast time scale, is dominated by the generators dynamics. It is shown that this excitation can be modeled as an auto-regressive graph filter \cite{isufi2017autoregressive} (GF-AR (2)) for the input signals from the generator internal bus.
The spatial properties in quasi-steady state are captured by  defining another  GSO using the `generator-only' or Kron-reduced \cite{kron_reduction_power_network} network for the generator buses.
\end{enumerate}
These models set the foundations to revisit known GSP based algorithms for sampling and reconstruction, interpolation and denoising and network inference in the context of signal-processing PMU data.
We harness the GFT for  feature extraction, to detect anomalies (specifically, FDI attacks \cite{GlobalSIP2019}) and to derive a lossy PMU voltage data compression algorithm, leveraging the sparsity of the GFT signal. 
By elucidating all the steps in modeling power systems data from graph construction, signal model, identification of the low-pass structure that is responsible for low-dimensional representation to signal denoising, network inference and anomaly detection, we illustrate how one can similarly develop  GSP based models in other application domains especially data that can be modeled as the output of \textit{low-pass} graph signals \cite{Ramakrishna2020}. 		
\vspace{-0.25em}
\subsection{Paper organization}
\vspace{-0.25em}
 Section \ref{sec:prelims} reviews concepts from GSP focusing on complex-valued graph signals and applicable more broadly to bandpass signals whose signal models rely on complex envelopes or phasors. It also reviews measurements and parameters pertaining to the grid.  
Section \ref{sec:gridGSP}  lays the foundation for Grid-GSP mapping the physical laws to a spatio-temporal generative model for voltage signals. Through these lens, in Section \ref{sec:GSP_algos_PMU_data}, the paper revisits algorithms and tools from GSP for PMU data pertaining to sampling and reconstruction along with optimal placement of PMUs, interpolation of missing samples and network inference. 
Section \ref{sec:appl_Grid_GSP} highlights applications of Grid-GSP to detect FDI attacks and for sequential lossy voltage data compression. 
The algorithms and methods are tested numerically in Section \ref{sec:results}. Section \ref{sec:conclusions} summarizes conclusions and  future research directions.
\par {\bf Notation}:  Boldfaced lowercase letters are used for vectors, $\bm{x}$ and uppercase for matrices, $\mathbf{A}$. Transpose is $\bm{x}^{\top},  \mathbf{A}^{\top}$ and conjugate transpose is $\mathbf{A}^{H}$.
$\left[\bm{x}\right]_{\mathcal{M}}$ is the new vector that has elements of the vector indexed by the set $\mathcal{M}$. The operation $\Re\{ . \} , \Im\{. \}$ denote the real and imaginary parts of the argument. 
 Pseudo-inverse of a matrix is  $\mathbf{A}^{\dagger}$. The operation $\texttt{diag}(\bm{x} )$ creates a diagonal matrix with elements from a vector $\bm{x}$ and $\texttt{Diag}(\mathbf{A})$ is the  extraction of diagonal values of a matrix $\mathbf{A}$.
\vspace{-0.435em}
\section{Preliminaries}\label{sec:prelims}
\subsection{Graph Signal Processing (GSP) in a nutshell}
Consider an \textit{undirected} graph $\mathcal{G} = \left( \mathcal{N}, \mathcal{E} \right)$ with nodes $i \in \mathcal{N}$ and edges $(i,j) \in \mathcal{E}$.  A graph signal $\bm{x} \in \mathbb{C}^{\abs{\mathcal{N}}}$ is a vector whose $i$th entry $\left[\bm{x}\right]_{i}$ is associated to node $i \in \mathcal{N}$. The set of nodes connected to node $i$ is called the \textit{neighborhood} of $i$ and denoted as $\mathcal{N}_{i}$. 
GSP generalizes the notion of {\it discrete time shift} for a time series by introducing the notion of graph shift operator (GSO):
\begin{defn}
 A graph shift operator (GSO) is  a linear neighborhood operator, so that each entry of the {\it shifted} graph signal is a linear combination of the graph signal neighbors' values \cite{GSP_Moura}. 
\end{defn}
The linear combinations can use complex-valued weights $s_{ij} \in \mathbb{C}, ~ \text{with}~ s_{ij} = 0, (i,j) \notin \mathcal{E}$ and, clearly, the GSO can be defined as matrix multiplication, with a matrix $\mathbf{S} \in \mathbb{C}^{\abs{\mathcal{N}} \times \abs{\mathcal{N}} }$.
Although not the only option,  one common choice for the GSO, is that of the graph weighted Laplacian\footnote{In truth, the Laplacian should be considered a graph differential operator as opposed to a shift operator, but we use the conventional name nonetheless in the rest of the paper to be consistent with the literature.}, i.e: 	 
 \begin{equation}
	\left[\mathbf{S}\right]_{i,j} = \begin{cases}
						\sum_{k \in \mathcal{N}_{i}} s_{i,k}, ~ i = j \\
						-s_{i,j}, ~ i \neq j
					\end{cases}
\end{equation}
 In this work, we focus on complex symmetric GSOs, $\mathbf{S} = \mathbf{S}^{\top}$ as is applicable to the weighted graph Laplacian for the power grid. Having defined the notion of shift, one can introduce the notion of \response{shift-invariance:}
 \begin{defn}
  Given a GSO $\mathbf{S}$ a {\it shift invariant operator} ${\cal H}$ acting on a graph signal is such that:
  \vspace{-0.8em}
\begin{equation}
{\cal H}: \bm{x} \mapsto \bm{v},~~~~\Longleftrightarrow~~~~{\cal H}: \mathbf{S}\bm{x} \mapsto \mathbf{S}\bm{v}. \label{eq:shift_invariance}
 \end{equation}
 \end{defn}
 \response{Linear shift-invariant operators must be  matrix polynomials of the GSO $\mathbf{S}$ \cite{sandryhaila2013discrete}. } Therefore, a linear shift-invariant graph filter is a linear operator and can be defined as:
\begin{equation}
\textstyle 	{\bm v}={\cal H}(\mathbf{S})\bm{x},~~	
		{\cal H}\left(\mathbf{S}\right)\!=\! \textstyle \sum\limits_{k=0}^{K-1}h_{k} \mathbf{S}^{k} \label{eq:graph_filter} 
\end{equation}	
Additionally, linear shift-invariant graph filters satisfy the condition: $	\mathbf{S}{\cal H}\left(\mathbf{S}\right)  = 	{\cal H}\left(\mathbf{S}\right) \mathbf{S}$.
\par Consider the following eigenvalue decomposition of the complex symmetric GSO $\mathbf{S}$, given by Theorem $4.4.13$ in \cite{Matrix_analysis} for  diagonalizable complex symmetric matrices:
\begin{align}
\mathbf{S} &= \mathbf{U} \bm{\Lambda} \mathbf{U}^{\top}, &	\mathbf{U}^{\top}\mathbf{U} = \mathbf{U}\mathbf{U}^{\top} = \mathbb{I}. \label{eq: Lap_eig}
\end{align}
Here $\bm{\Lambda}$ is the diagonal matrix with eigenvalues $\lambda_{0}, \lambda_{1}, \dots \lambda_{\abs{\mathcal{N}}-1 }$ on the principal diagonal  and  $\mathbf{U}$ are \textit{complex orthogonal} eigenvectors.
An equivalent concept of frequency domain in GSP is defined using eigenvalues and eigenvectors of the GSO. 
\par Graph frequencies are the eigenvalues of the GSO and the  order of frequencies is based  on the total variation (TV) criterion \cite{GSP_Moura,mallat2008wavelet} defined using the discrete $p$ Dirichlet form $S_{p}\!\!\left(\bm{x}\right)$ with $p=1$ as in \cite{Directed_graph_laplacian} as:
\begin{align}
S_{1}\!\!\left(\bm{x}\right)\!=\! \norm{\mathbf{S}\bm{x}}_{1} \implies S_{1}\!\left(\bm{u}_{i}\right) =  \norm{\mathbf{S}\bm{u}_{i}}_{1} = \lvert \lambda_{i} \rvert \norm{\bm{u}_{i}}_{1}
\end{align} 
After normalizing the eigenvectors such that $\norm{\bm{u}_{i}}_{1} = 1 ~\forall i$, it is clear that 
$S_{1}\left(\bm{u}_{i}\right) > S_{1}\left(\bm{u}_{j}\right) \implies \lvert \lambda_{i} \rvert > \lvert \lambda_{j} \rvert.$
Hence, the \textit{ascending} order of eigenvalues corresponds to increase in frequency,
$\lvert \lambda_{0} \rvert=0 \leq 	\lvert \lambda_{1} \rvert \leq 	\lvert \lambda_{2} \rvert \dots 	\lvert \lambda_{\abs{\mathcal{N}}} \rvert$.
This ordering  is not unique since two distinct complex eigenvalues can have the same magnitude.
\par The Graph Fourier Transform (GFT) basis is the complex orthogonal basis $\mathbf{U}$ in \eqref{eq: Lap_eig}. Hence, the  GFT of a graph signal $\bm{x}$, $\tilde{\bm{x}}$ and the inverse GFT are given by $\tilde{\bm{x}} = \mathbf{U}^{\top}\bm{x}$ and $\bm{x} = \mathbf{U} \tilde{\bm{x}}$ respectively where $\left[ \tilde{\bm{x}}\right]_{m} $ is the frequency component that corresponds to the $m$-th eigenvalue $\lambda_{m}$\footnote{It is worth noting that the graph shift operator and Fourier transforms do not have in general important properties that are found in their conventional counterparts for time series.  One notable fact is that the spectrum of ${\bm S}\bm x$ does not have the same amplitude as the spectrum of $\bm x$. In fact, the GSO effect is closer to that of a derivative, since each of the GFT coefficients is rescaled by the corresponding frequency. For complex symmetric, rather than Hermitian operators, unfortunately also Parseval theorem is not valid.}. Also, we can define the \textit{graph-frequency} response of the graph filter, $\tilde{\bm{h}}$, by writing
\begin{align}
\hspace{-2pt}\textstyle	\mathcal{H}\left(\mathbf{S}\right)&\!=\!\! \textstyle \sum\limits_{k=0}^{K} h_{k} \mathbf{S}^{k} \!=\! \mathbf{U} \texttt{diag}( \tilde{\bm{h}}) \mathbf{U}^{\top},\\ 
[\tilde{\bm{h}}]_{i} &\!\!=\! \!\mathrm{H}(\lambda_i),~~~~ \mathrm{H}(\lambda):=\sum\limits_{k=0}^{K}h_{k} \lambda^{k}  \label{eq:graph_filter_freq} 
\end{align}	
$ h_{k} \in \mathbb{C}, ~i =1,2,\dots \abs{\mathcal{N}}  $.
The frequency response of the filter is given by elements in $\tilde{\bm{h}}$. Subsequently, the input and output of a graph filter in graph-frequency domain are related as
\begin{align}
{\bm v} = {\cal H}(\mathbf{S})\bm{x} ~~\rightarrow ~~\tilde{ \bm{v}  } = \texttt{diag} (\tilde{\bm{h}}) \tilde{\bm{x}}, 
\end{align}
which is analogous to convolution theorem for time-domain signals. Naturally, this leads to the extension of notions such as low-pass, high-pass and band-pass filters and signals that are at the \response{heart} of sampling and interpolation schemes. 

\subsection{GSP for time series of graph signals}\label{subsec:GSP_time}
\response{So far, only the nodal index for the graph signal  $\bm{v}$ was considered. However, one can also encounter graph signal processes i.e. temporal variations in a graph signal  $\left\{\bm{v}_t\right\}_{t\geq 0}$. Since we are interested in the temporal characterization of voltage graph signals, we revise GSP concepts that are applied to time series of graph signals \cite{isufibanelli2016,isufi2017filtering} in this subsection. Then, we utilize these concepts while modeling the temporal dynamics at generator buses in Section \ref{subsec:generation}.}

\response{In order to characterize graph signal process $\{ {\bm v}_t \}_{t \geq 0}$, a joint time-vertex domain  is considered in the literature by defining filters whose response is shift invariant with respect to the time series shift operator $z^{-1}$ and an appropriately chosen GSO \cite{JFT}. 
To study the same,  map the time series of graph signal $\bm{v}_t$ in both the \response{graph frequency (GF)} and $z-$domain  by the application of $z$-transform to the GFT of the graph signal process:  }
\begin{align}
\textstyle \bm{V}(z)=\sum_{t=0}^{+\infty}\bm{v}_tz^{-t}, \widetilde{\bm{V}}(z) = \mathbf{U}^T\bm{V}(z),
\end{align}	
\response{We refer to $\widetilde{\bm{V}}(z)$ as the $z$-GFT. }
\response{A graph temporal filter's \cite{isufi2016separable}  impulse response ${\cal H}_t(\mathbf{S})$ and output $\bm{v}_t$ are
\begin{align}
{\cal H}_t(\mathbf{S}) = \sum_{k=0}^{K} h_{k,t}\mathbf{S}^{k}, ~\bm{v}_t = \sum_{\tau = 0}^{t} {\cal H}_{t-\tau}(\mathbf{S}) \bm{x}_{\tau},
\end{align}
respectively. Graph filter output $\bm{v}_t$  in the $z$-domain is:}
\begin{align}
\!\!\!\!\bm{V}(z)\!=\!{\cal H}(\mathbf{S}\!\otimes\! z)\bm{X}(z), \text{where}~
{\cal H}(\mathbf{S}\!\otimes\! z)\!:=\!
\sum_{t=0}^{+\infty}\!{\cal H}_t(\mathbf{S})z^{-t} \label{eq:z_domain_MIMO}
\end{align}
when the input is $\bm{x}_t$ with $z$-transform ${\bm{X}}(z)$
\response{and} ${\cal H}_t(\mathbf{S})$ are matrix polynomials of the GSO operator:
\begin{equation}
  \textstyle   {\cal H}_t(\mathbf{S})=\sum_{k=0}^{K}
    h_{k,t}\mathbf{S}^{k}~~\leftrightarrow~~
    {\cal H}(\mathbf{S}\otimes z)=
    \sum\limits_{k=0}^{K}H_k(z)\mathbf{S}^{k}.
\end{equation}
Here $H_k(z)$ is the $z$-transform of the filter $h_{k,t}$. We can define also the following impulse response in the GF domain:
\begin{equation}
\textstyle [\tilde{\bm h}_t]_i=\mathrm{H}_t(\lambda_i),~~~~\mathrm{H}_t(\lambda):=\sum_{k=0}^{
K}h_{k,t}\lambda^k \label{eq:GF_h}
\end{equation}
and the graph-temporal joint transfer function in the $z$ and \response{GF domain} as:
\begin{equation}
\textstyle [\tilde{\bm h}(z)]_i=\mathbb{H}(\lambda_i,z),~~\mathbb{H}(\lambda,z)=
\sum_{t=0}^{+\infty}\sum_{k=0}^{K}h_{k,t}\lambda^k z^{-t}
\end{equation}
With that, we obtain following input-output relationship:
\begin{align}
    \tilde{\bm{V}}(z)=
    \texttt{diag}
    \left(
    \tilde{\bm h}(z)
    \right)
    \tilde{\bm{X}}(z),
\end{align}
 by applying GFT to  $z$-domain  in \eqref{eq:z_domain_MIMO}.
\par  \response{In this work, we focus on a  class of graph-temporal filters called GF-ARMA $(q,r)$ filter \cite{isufi2016separable,isufi2017autoregressive} }. The input-output relation in both time  and $z$-GFT domain are described below, respectively:
\begin{equation} \notag
\begin{split}
& {\bm v}_t\!-\! {\cal A}_1( \mathbf{S} ) {\bm v}_{t-1}\!\cdots\! -\! {\cal A}_q (\mathbf{S}) {\bm v}_{t-q} \!\! =\!\! {\cal B}_0( \mathbf{S} ) {\bm x}_t \!\!+ \cdots \!+\! {\cal B}_r( \mathbf{S} ) {\bm x}_{t\!-\!r}, \\
&
  \texttt{diag}
  \left(
  \tilde{\bm a}(z)
  \right)
  \tilde{\bm{V}}(z)=
  \texttt{diag}
  \left(
  \tilde{\bm b}(z)
  \right)
  \tilde{\bm{X}}(z),
\end{split}
\end{equation}
where $\widetilde{\bm a}(z)=1-\sum_{t=1}^q \widetilde{\bm a}_t z^{-t}$ and  $\widetilde{\bm b}(z)=\sum_{t=0}^r \widetilde{\bm b}_t z^{-t}$ are the $z$-transform of the graph frequency responses of the graph filter taps $\{{\cal A}_t (\mathbf{S}) \}_{t=1}^q$, $\{{\cal B}_t (\mathbf{S}) \}_{t=0}^r$ for the GF-ARMA $(q,r)$ filter. \response{Particularly, the GF-AR (2) filter is used in Section  \ref{subsec:generation} to describe generator temporal dynamics.}
\vspace{-1em}
\subsection{Measurements and parameters of the electric grid}\label{ref:voltPMU}
\response{The electric grid network  can be represented by an undirected graph $\mathcal{G} = (\mathcal{N},\mathcal{E})$ where nodes are \textit{buses} and its edges are its \textit{transmission lines}.
The vertex set is a union between set of generator, $\mathcal{N}_G$ and non-generator/ load buses, $\mathcal{N}_L$, $ i \in \left\{\mathcal{N}_G \cup \mathcal{N}_L \right\} = \mathcal{N}$ and the edge set $(i,j) \in \mathcal{E}$ depicts electrical connections. 
\response{To obtain  Ohm's law for a network of transmission lines, one starts from the telegrapher equations for a single line to obtain the so-called ABCD parameters that relate input-output currents and voltages in the Fourier domain. 
The equations are then rearranged and the so-called $\pi$-model is attained, which is an equivalent circuit containing a series impedance element and parallel susceptance elements.  The  $\pi$-model  leads to the $2\times 2$ branch admittance matrix  that relates current and voltage injections at the {\it from} and {\it to} ends of a transmission line \cite{GloverSarmaOverbye:Book}.}
 \response{From the branch admittance matrix of the network, applying Kirchhoff's law, one can relate the current and voltage phasors for the entire network, introducing a \textit{system admittance matrix}, $\bm{Y} \in \mathbb{C}^{\abs{\mathcal{N}}}$\cite{GloverSarmaOverbye:Book}  thus obtaining the network version of Ohm's law (see \eqref{eq:Ohm}). The matrix $\bm{Y} $ is defined as:
\begin{align}
\left[\bm{Y}\right]_{i,j} = \begin{cases}
\sum_{k \in \mathcal{N}_{i}} y_{i,k}, ~ i = j \\
-y_{i,j}, ~ i \neq j
\end{cases}
\end{align}
where $y_{i,j}$ is the admittance of the branch between buses $i$ and $j$ if  $(i,j) \in \mathcal{E}$.  }
The system admittance matrix $\bm{Y}$ is a complex symmetric matrix and it is equivalent to the complex-valued graph Laplacian matrix associated with the power grid. }
Next, we will partition the nodes or buses into generator and non-generators, so that:
\begin{align}
\bm{Y} = 
\begin{bmatrix}
			\bm{Y}_{gg} & \bm{Y}_{g\ell} \\
			\bm{Y}^{\top}_{g\ell}& \bm{Y}_{\ell \ell} 
			\end{bmatrix} , 
\end{align}
where $\bm{Y}_{gg}$ is the generator buses-only network, $\bm{Y}_{g\ell}$ includes the portion connecting generators and loads and $\bm{Y}_{\ell \ell}$ corresponds to the section of the grid connecting the loads buses among themselves. 
The shunt \response{(fixed admittance to ground at a bus)} elements at all generator buses  are denoted by $\bm{y}_{sh}^{g} \in \mathbb{C}^{\abs{\mathcal{N}_G}}$ and at all load buses by $\bm{y}_{sh}^{\ell} \in \mathbb{C}^{\abs{\mathcal{N}_L}}$.
\par The state of the system, from which all other physical quantities of interest can be derived, are the voltage phasors at each bus.
\response{In the following we assume that a \response{PMU}  installed on node/bus $i \in \mathcal{N}$ provides a noisy measurement of voltage and current phasors at time $t$ where $v(t,i) = \abs{v(t,i)} e^{j\theta(t,i)}$. With some abuse of notation, we will refer to the PMU data as $v(t,i)$ as well.   Let the vector of voltage phasors collected at time $t$ be $\bm{v}_{t} \in \mathbb{C}^{\abs{ \mathcal{N} }}$. 
After $\bm{v}_{t}$ is partitioned into voltages at generator and non-generator buses,  let $\bm{i}^{g}_{t} \in \mathbb{C}^{\abs{\mathcal{N}_G}}$ be the generator current and $\bm{i}^{\ell}_{t} \in \mathbb{C}^{\abs{\mathcal{N}_L}}$ the load current. }
Ohm's law for a network is\footnote
{\response{Note that the admittances values are frequency responses evaluated at $60$Hz (for the US) the voltage and current signals are the corresponding envelopes at the same frequency; hence the assumption is the voltage and currents are narrowband and the convolution can be approximated by gain and phase rotation equal to the  Fourier response at $60$ Hz.}}: 
\begin{align}\label{eq:Ohm}
\hspace{-1.1em}\left(\!\bm Y \!+\!
\texttt{diag}\! \left( \begin{bmatrix}
 \bm{y}_{sh}^{g} \\
\bm{y}_{sh}^{\ell}
\end{bmatrix}\right)\!\right)\bm{v}_{t} \! =\!  \bm{i}_{t}, \text{where}~ \bm{v}_{t} \!=\! \begin{bmatrix}
\bm{v}^{g}_{t}  \\ \bm{v}^{\ell}_{t} 
\end{bmatrix},\! \bm{i}_{t}\! =\! 
\begin{bmatrix}
\bm{i}^{g}_{t} \\ 
\bm{i}^{\ell}_{t}
\end{bmatrix}
\end{align}
To describe the operating conditions of the system we introduce a few more quantities.  In power systems transient dynamic analysis the impact of generating units is modeled  as an internal bus characterized by a generator impedance (or admittance)
$\bm{y}_{g} \in \mathbb{C}^{\abs{\mathcal{N}_G}}$ for $g \in \mathcal{N}_G$  
connected to an ideal voltage source called {\it internal voltage}\cite{GloverSarmaOverbye:Book}; we denote its value at time $t$ by \response{$E(t,i) = \abs{E(t,i)}e^{\delta_(t,i)}, i \in \mathcal{N}_G$} and the corresponding vector as $\bm{e}_{t} \in  \mathbb{C}^{\abs{ \mathcal{N}_G }}$ so that \response{$\left[\bm{e}_{t}\right]_i = E(t,i) $}. The current at generator bus in \eqref{eq:Ohm}, $\bm{i}^g_{t}$, is obtained as the multiplication of generator admittance and the difference in voltage at the internal bus and the generator bus \cite{GraphPowerSystem} :
\begin{equation}\label{eq:gen-current}
\bm{i}^g_{t}=
\texttt{diag}(\bm y_g)\left(\bm{e}_{t}-\bm{v}^{g}_{t}\right)
\end{equation}
\response{As mentioned in Section. \ref{sec:introduction}, the generators respond to electric load in the grid. 
In order to model the generators response, a commonly used approximation is that at the load buses $\ell \in \mathcal{N}_L$ are slowly varying admittances \cite{GraphPowerSystem}. We denote them as  $\bm{y}_{\ell}(t) \in \mathbb{C}^{N}$.}
\section{ Graph Signal Processing for the grid}\label{sec:gridGSP}
Having described the relevant GSP concepts and introduced grid quantities and parameters of interest, we are ready to introduce the Grid-GSP framework\footnote{Our preliminary GSP modeling effort can be found in \cite{DSW2019}}. Firstly, we define the GSO for the grid, then support the definition by introducing the graph-filter model that justifies it, and finally characterize its temporal dynamics. 
All of the above yields a GSP generative model for the voltage phasor measurements as a low-pass GSP model, as detailed next.  
 
\subsection{Grid graph generative model} \label{subsec:GSO} \label{subsec:grid_GSP_generative_model}
  Grid-GSP for voltage phasors data  relies on the following definitions:
 \begin{defn}\label{def.gso}
 The {\bf graph shift operator} (GSO) is a complex symmetric matrix equal to a diagonal perturbation of the system admittance matrix with generator admittance values,
\begin{align}
\mathbf{S} \triangleq  \bm{Y} + \texttt{diag} \left( \begin{bmatrix}
\bm{y}_{g} + \bm{y}_{sh}^{g} \\
\bm{y}_{sh}^{\ell}
\end{bmatrix}\right) \label{eq:grid_GSO}
\end{align}
 \end{defn}
 From the definition of the GSO it follows that:
 \begin{defn}\label{def.gft}
 	The grid {\bf Graph Fourier Transform} (GFT) basis for voltage phasors is the orthogonal matrix $\mathbf{U}$ given by the eigenvalue decomposition of the GSO in Definition \ref{def.gso}:
 	\begin{align}
 	\mathbf{S} = \mathbf{U} \bm{\Lambda} \mathbf{U}^{\top}, ~~ \abs{\lambda_{\text{min}}} > 0
 	\end{align}
 \end{defn} 
Here, the GSO $ \mathbf{S}$   is a \textit{complex-symmetric} matrix that has the same support as the electric-grid graph Laplacian as $\bm{Y}$ with the diagonal addition of  generator admittances. 
Note that unlike the graph Laplacian, this GSO is \textit{invertible}, $\abs{\lambda_{\text{min}}} > 0$.  
Even when shunt elements $\bm{y}_{sh}^{g}, \bm{y}_{sh}^{\ell}$ are ignored as conventionally done to solve power-flow problems in power systems,  a diagonal term with the generator admittances $\bm{y}_{g}$ that is added to the principal diagonal of $\bm{Y} $, makes the GSO $\mathbf{S}$ invertible \response{(see \eqref{eq:grid_GSO})}. 
\par \response{With the GSO $\mathbf{S}$ is defined as in \eqref{eq:grid_GSO}, one can rewrite \eqref{eq:Ohm} and substitute for $\bm{i}^g_{t}$ from \eqref{eq:gen-current} :
\begin{align}
 &\left(\!\bm Y \!+\!
 \texttt{diag}\! \left( \begin{bmatrix}
 \bm{y}_{sh}^{g} \\
 \bm{y}_{sh}^{\ell}
 \end{bmatrix}\right)\!\right)\bm{v}_{t} \! =\!  \begin{bmatrix}
 \texttt{diag}(\bm y_g)\left(\bm{e}_{t}-\bm{v}^{g}_{t}\right) \\
 \bm{i}^{\ell}_t
 \end{bmatrix} 
\end{align}
\response{From now on with slight abuse of notation we denote  $\bm{v}_t$ as voltage phasor measurements that are noisy therefore we add measurement noise $\eta_t$ which yields the following equation,}
 \begin{align}
 \boxed{\bm{v}_{t} =  \mathcal{H}( \mathbf{S}) \begin{bmatrix}
 	\texttt{diag}(\bm y_g)\bm{e}_t\\
 	\bm{i}^{\ell}_t
 	\end{bmatrix}
 	+ \bm{\eta}_{t}}
 \label{eq:LPmodel}
 \end{align}
The  Grid-GSP generative model for voltage phasor measurements $\bm{v}_{t}$ is given by \eqref{eq:LPmodel}. The linear shift-invariant graph filter is $\mathcal{H}( \mathbf{S})=    \mathbf{S}^{-1}$.}
\begin{rem}
	Shift-invariance  of $\mathcal{H}( \mathbf{S}) = \mathbf{S}^{-1}$ can be directly verified from \eqref{eq:shift_invariance} i.e. $\mathcal{H}( \mathbf{S}) \mathbf{S}= \mathbf{S}  \mathcal{H}( \mathbf{S}) = \mathbb{I}$. \response{Since $\mathcal{H}( \mathbf{S})$ is shift invariant, it can be expressed as a matrix polynomial in $\mathbf{S}$ (Theorem 1 in \cite{sandryhaila2013discrete}). 
	Also,  $\mathcal{H}( \mathbf{S}) = \mathbf{S}^{-1}$ can be written as in \eqref{eq:graph_filter} where coefficients $h_k$  can be determined by the application of Cayley-Hamilton theorem for inverse matrices \cite{decell1965application}. }	
\end{rem}
$\mathcal{H}( \mathbf{S})$  is approximately a low-pass graph filter \cite{Ramakrishna2020} due to the inversion of GSO since the graph frequency  response  of the filter can be written from \eqref{eq:graph_filter_freq} as $\texttt{diag}\left(\tilde{\bm{h}}\right) = \bm{\Lambda}^{-1}$. This implies that as the graph frequency decreases, the magnitude of the filter response declines. 
\response{More importantly, since generic power grids tend to be organized as communities system admittance matrix $\bm{Y}$ tends to be sparse \cite{sato1963techniques}. Therefore the GSO $\mathbf{S}$ has a high condition number and the graph frequency response of $\mathcal{H}( \mathbf{S})$ is such that it tapers off after a certain $\lambda_k$.}

 To visualize this more explicitly,  consider 	$\bm{\Lambda}_\mathcal{K}$ to be the diagonal matrix with entries $\lambda_i,~i\in\mathcal{K}=\{1,\ldots,k\}$. Define a low-pass filter with $k$ frequency components and consequently the voltage phasor measurements as
\begin{align}
\mathcal{H}_{k}( \mathbf{S}) &\triangleq\! {\mathbf U} ~\texttt{diag}\left(\tilde{\bm{h}}_{k}\right) {\mathbf U}^{\top}, 
	\left[\tilde{\bm{h}}_{k}\right]_{i} \!\!=\!\! \begin{cases}
			\lambda_{i}^{-1}, ~i \in\mathcal{K} \\
			0, ~~ \text{else}
	\end{cases} \nonumber\\
{\bm v}_t &\approx \mathcal{H}_{k}( \mathbf{S}) \begin{bmatrix}
 \texttt{diag}(\bm y_g)\bm{e}_t\\
 \bm{i}^{\ell}_t
\end{bmatrix}  +{\bm \eta}_t, \label{eq:generative_model}
\end{align}
where $\mathcal{H}_{k}( \mathbf{S})$ will  represent the principal subspace of the voltage phasors whose dimensionality is the number of graph-frequencies  $|\mathcal{K}|$.  Therefore \eqref{eq:generative_model} defines the low-dimensional generative model for quasi-steady state voltage phasor measurements.
\response{The  error term $ \bm{\eta}_{t}$ now also captures modeling approximation. }
To provide insights on the temporal dynamics of the voltage phasors, we need to capture the structure of the excitation term. As a matter of fact, $\bm{e}_{t}$ and $\bm{i}^{\ell}_{t}$, have different dynamics, as discussed  in the subsequent subsections.
\subsection{A GSP model for generator dynamics: $\bm{e}_{t}$}\label{subsec:generation}
The excitation term corresponding to generator currents has elements as $\left[\bm{e}_{t}\right]_{i}$ coming from each generator $i \in G$. We illustrate a non-linear dynamical model for the generators internal voltages, namely $\bm{e}_t \in \mathbb{C}^{\mathcal{N}_G}$ \response{utilizing a GF-AR(2) graph temporal filter from Section \ref{subsec:GSP_time}.}
The model is inspired by the classical swing equations\cite{Dorfler2013,kron_reduction_power_network} that describes the coupled dynamics of the generators phase angles, ${\delta}_{i}(t), ~ i \in G$ and the resulting variation in frequency, $ {\omega}_{i}(t) \triangleq \dot{\delta_{i}} - \omega_{0}$ where ${\omega}_{0} = 2\pi f_{0} $ with $f_{0}$ being the grid frequency ($50$ or $60$ Hz). 
\par Our model, relies on two steps. 
First, we model the dynamics of a signal $\bm{x}_t$ obtained through the following non-linear transformation of the internal generator voltages:
\begin{align}
\bm{x}_t &\triangleq (\texttt{diag}(\bm{m}))^{\frac 1 2}\ln(\bm{e}_t) \\
\bm{\delta}_t&= (\texttt{diag}(\bm{m}))^{-\frac 1 2} \Im \{ \bm{x}_t \}, \abs{\bm{e}}_t \!=\! (\texttt{diag}(\bm{m}))^{-\frac 1 2} \Re \{ \bm{x}_t \} \nonumber
\end{align}
where the vector $\bm{m}$ entries are the so-called generators masses, $\bm{\delta}_t$ are the generators angles that appear in the swing equations and $\abs{\bm{e}}_t $ are internal generator voltage magnitudes. 
Second, like in the swing equations, to describe the generators interactions, we resort to a Kron-reduction \cite{Dorfler2013,kron_reduction_power_network} of the network, in which generators are all adjacent.  
To define this generator-only network and the corresponding GSO, consider the following admittance matrix, $\bm{Y}_{\text{all}}$ that describes the network topology consisting of the generator internal buses, generator buses and non-generator buses like done in \cite{kron_reduction_power_network}:
	\begin{align}
	&\bm{Y}_{\text{all}}\!=\! \begin{bmatrix}
	\texttt{diag}\left(\bm{y}_{g} \!+\! \bm{y}_{sh}^{g} \right) & -\begin{bmatrix}
	\!\texttt{diag}\left(\bm{y}_{g}\! +\! \bm{y}_{sh}^{g} \right) \!&\! \bm{0}
	\end{bmatrix}\\
	-\begin{bmatrix}
	\!\texttt{diag}\left(\bm{y}_{g} \!+\! \bm{y}_{sh}^{g} \right) \!&\! \bm{0}
	\end{bmatrix}^{\top}& \mathbf{S}_{\text{ci}}\\
	\end{bmatrix} \nonumber \\
	&\mathbf{S}_{\text{ci}} \triangleq \mathbf{S}+  \!\texttt{diag}\left( \begin{bmatrix} \bm 0 &
	   \bm{y}_{\ell}^{\top} \end{bmatrix}\right) \nonumber
	\end{align}
In order to model $\bm{Y}_{\text{all}}$, it is assumed that the loads are varying very slowly in time i.e. $\bm{y}_{\ell}(t) \approx \bm{y}_{\ell} ~~\forall t$.
Then, let us denote by $\mathfrak{Sh}(\bm{A}, \bm{B} )$ the Schur complement of block $\bm{B}$ of matrix $\bm A$.
We compute the Schur complement of block $\mathbf{S}_{\text{ci}}$ of the matrix $\bm{Y}_{\text{all}}$ which is nothing but Kron reduction.
The Schur complement of the $\mathbf{S}_{\text{ci}}$ in $\bm{Y}_{\text{all}}$ has two contributions:
\begin{equation}
\mathfrak{Sh}(\bm{Y}_{\text{all}}, \mathbf{S}_{\text{ci}}   )=j\bm{Y}_{\text{red}}+{\bm E}_{\text{red}}
\end{equation}
where ${\bm E}_{\text{red}}$ is a real diagonal dominated matrix, and the imaginary part $\bm{Y}_{\text{red}}$ has the structure of a graph Laplacian. 
The proposed dynamical model for the graph signal $\bm x_t$ relies on the following definition for the GSO of the Kron-reduced generator-only graph:
\begin{defn}
A GSO is defined for the  Kron-reduced generator only network as
\begin{equation}
    \mathbf{S}_{\text{red}}=
    (\texttt{diag}(\bm m))^{-\frac 1 2}
    \bm{Y}_{\text{red}}(
    \texttt{diag}(\bm m))^{-\frac 1 2} \in \mathbb{R}^{\abs{\mathcal{N}_G}}
\end{equation}
with the following eigenvalue decomposition,
\begin{align}
\mathbf{S}_{\text{red}} = \mathbf{U}_{\text{red}}  \bm{\Lambda}_{\text{red}}  \mathbf{U}_{\text{red}}^{\top} 
\end{align}
and the orthonormal GFT basis being $\mathbf{U}_{\text{red}}$. 
\end{defn}
We introduce the GSP based dynamical model for the complex-valued generator internal voltages $\bm{e}_{t}$ \response{via graph temporal filter GF-AR (2)} as follows,
\begin{framed}
{\bf GSP-based dynamics for generator internal voltages}
\begin{equation}
    \bm{e}_t=\exp\left((\texttt{diag}(\bm{m}))^{-\frac 1 2}\bm{x}_t\right)
\end{equation}
where $\bm{x}_t$ is a \response{GF-AR (2) process}, i.e. the $z$-GFT $\tilde{\bm X}(z)$ satisfies the following:
\begin{align}
   \texttt{diag}\left(
    \tilde{\bm a}(z)
    \right)
    \tilde{\bm{X}}(z)&=
    \tilde{\bm{W}}(z) \\
    \tilde{\bm a}(z) &=1-\tilde{\bm a}_1 z^{-1}-\tilde{\bm a}_2 z^{-2}
\end{align}
\vspace{-2em}
\end{framed}
The GSP based dynamical model takes inspiration from swing equation for generator angles  and we empirically choose to model  generator internal voltage magnitude $\abs{\bm{e}}_t$ also using a GF-AR (2) model although in most power system models, the dynamics of the amplitudes of the generators internal voltages are typically ignored. 
The swing equation for the generators angles \cite{ZhuFreqResp} are a key tool for power systems dynamical analysis:  
\begin{align}
\texttt{diag}\left(\bm{m} \right)\ddot{\bm{\delta}} + \texttt{diag}\left(\bm{d} \right)\dot{\bm{\delta}} =  \overline{\bm{w}} - \bm{Y}_{\text{red}}  \bm{\delta}, \label{eq:diff_eqn_delta1}
\end{align}
where $\bm{m}$ are the generators masses, introduced previously,  $\bm{d}$ are the damping coefficients of generators (often neglected) and $\overline{\bm{w}} - \bm{Y}_{\text{red}}  \bm{\delta}$ is the imbalance between the electrical and mechanical power that triggers the change in generator angular velocity and acceleration. 
Note that $\Im\{\bm x\}=\left(\texttt{diag}\left(\bm{m} \right)\right)^{\frac 1 2} \bm{\delta}$. 
We can manipulate \eqref{eq:diff_eqn_delta1} to prove the following:
\begin{prop}
Let $\bm{w}$ be such that $\Im\{\bm{w}\}=(\texttt{diag}(\bm{m}))^{-\frac 1 2}\overline{\bm{w}}$.
    Using the approximation: 
    \[\texttt{diag}\left(\bm{d} \right)\texttt{diag}\left(\bm{m} \right)^{-1}\approx \chi \mathbb{I}, \] i.e. the homogeneous simplification as in \cite{ZhuFreqResp,Paganini2020}, 
    the dynamics of $\Im\{\bm x_t\}$ can be justified with an GF-AR (2) model with GSO $\mathbf{S}_{\text{red}}$: 
    \begin{align}
        &\Im\{\bm x_t\}-{\cal A}_1(\mathbf{S}_{\text{red}})\Im\{\bm x_{t-1}\}
        -{\cal A}_2(\mathbf{S}_{\text{red}})\Im\{\bm x_{t-2}\}\approx\Im\{\bm{w}\},\nonumber\\
        &{\cal A}_1(\mathbf{S}_{\text{red}}):=
        (2\!-\!\chi)\mathbb{I}-\mathbf{S}_{\text{red}},~~{\cal A}_2(\mathbf{S}_{\text{red}}):=(\chi\!-\!1)\mathbb{I} \label{eq:swing_eq}
    \end{align}
\end{prop}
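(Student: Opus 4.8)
The plan is to derive a normalized continuous-time second-order ODE for $\Im\{\bm{x}_t\}$ from the swing equation \eqref{eq:diff_eqn_delta1}, and then discretize it on a unit time grid with an \emph{explicit} finite-difference scheme whose coefficients collapse precisely to $\mathcal{A}_1(\mathbf{S}_{\text{red}})$ and $\mathcal{A}_2(\mathbf{S}_{\text{red}})$.

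First, I would change variables from $\bm{\delta}$ to $\Im\{\bm{x}\}=(\texttt{diag}(\bm{m}))^{\frac 1 2}\bm{\delta}$ and left-multiply \eqref{eq:diff_eqn_delta1} by $(\texttt{diag}(\bm{m}))^{-\frac 1 2}$. Since the masses are time-invariant, the inertia term becomes $(\texttt{diag}(\bm{m}))^{\frac 1 2}\ddot{\bm{\delta}}=\ddot{\Im\{\bm{x}\}}$, and the restoring term becomes $(\texttt{diag}(\bm{m}))^{-\frac 1 2}\bm{Y}_{\text{red}}(\texttt{diag}(\bm{m}))^{-\frac 1 2}\Im\{\bm{x}\}=\mathbf{S}_{\text{red}}\Im\{\bm{x}\}$ by the definition of the Kron-reduced GSO. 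On the right-hand side $(\texttt{diag}(\bm{m}))^{-\frac 1 2}\overline{\bm{w}}=\Im\{\bm{w}\}$ by hypothesis. For the damping term, all the diagonal matrices commute, so $(\texttt{diag}(\bm{m}))^{-\frac 1 2}\texttt{diag}(\bm{d})(\texttt{diag}(\bm{m}))^{-\frac 1 2}=\texttt{diag}(\bm{d})(\texttt{diag}(\bm{m}))^{-1}$, which under the homogeneous simplification is $\approx\chi\mathbb{I}$. Collecting these yields the normalized ODE
\[\ddot{\Im\{\bm{x}\}}+\chi\dot{\Im\{\bm{x}\}}+\mathbf{S}_{\text{red}}\Im\{\bm{x}\}\approx\Im\{\bm{w}\}.\]

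Second, I would discretize this ODE with unit step, evaluating every term at index $t-1$: the centered second difference $\ddot{\Im\{\bm{x}\}}\mapsto\Im\{\bm{x}_t\}-2\Im\{\bm{x}_{t-1}\}+\Im\{\bm{x}_{t-2}\}$ (which is naturally located at $t-1$), the backward difference $\dot{\Im\{\bm{x}\}}\mapsto\Im\{\bm{x}_{t-1}\}-\Im\{\bm{x}_{t-2}\}$, and the restoring term as $\mathbf{S}_{\text{red}}\Im\{\bm{x}_{t-1}\}$. Grouping by time index, the coefficient of $\Im\{\bm{x}_t\}$ is exactly $\mathbb{I}$, the coefficient of $\Im\{\bm{x}_{t-1}\}$ is $-(2-\chi)\mathbb{I}+\mathbf{S}_{\text{red}}=-\mathcal{A}_1(\mathbf{S}_{\text{red}})$, and the coefficient of $\Im\{\bm{x}_{t-2}\}$ is $(1-\chi)\mathbb{I}=-\mathcal{A}_2(\mathbf{S}_{\text{red}})$, which reproduces \eqref{eq:swing_eq}.

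The step I expect to be the crux is the choice of discretization. The GF-AR (2) structure of \eqref{eq:swing_eq} requires the coefficient of $\Im\{\bm{x}_t\}$ to be the identity, so the scheme must be explicit: the damping, restoring, and forcing terms must act on the lagged samples $\Im\{\bm{x}_{t-1}\},\Im\{\bm{x}_{t-2}\}$ rather than on $\Im\{\bm{x}_t\}$, which is precisely what evaluating the continuous equation at $t-1$ with the centered acceleration difference achieves. A fully implicit scheme would instead leave $(\mathbb{I}+\mathbf{S}_{\text{red}})$ multiplying $\Im\{\bm{x}_t\}$ and force an inversion that breaks the stated graph-filter AR form. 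I would also emphasize that the relation holds only approximately, both because of the homogeneous damping simplification and because of the first-order truncation error of the finite differences, consistent with the ``$\approx$'' in the statement.
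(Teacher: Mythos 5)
Your proposal is correct and follows essentially the same route as the paper: left-multiplying the swing equation by $(\texttt{diag}(\bm{m}))^{-\frac 1 2}$ to obtain $\Im\{\ddot{\bm x}\}+\chi\Im\{\dot{\bm x}\}\approx\Im\{\bm w\}-\mathbf{S}_{\text{red}}\Im\{\bm x\}$, then applying a centered second difference for the acceleration and a backward first difference for the velocity on a unit time grid. Your discretization indexed at $t-1$ is just the paper's scheme (which writes $\ddot{\bm x}\approx\bm x_{t+1}-2\bm x_t+\bm x_{t-1}$, $\dot{\bm x}\approx\bm x_t-\bm x_{t-1}$) shifted by one sample, and your added remark about why the scheme must be explicit is a useful clarification but not a different argument.
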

\begin{proof}
Simple algebra on \eqref{eq:diff_eqn_delta1} allows us to recast the equations in the following form:
\begin{align}
\Im\{\ddot{\bm x}\} + \chi\Im\{\dot{\bm x}\}=  \texttt{diag}\left(\bm{m} \right)^{-\frac 1 2}\overline{\bm{w}} - \mathbf{S}_{\text{red}}  \Im\{\bm x\}. \label{eq:diff_eqn_Ix}
\end{align}
Assuming that the sampling rate is fast enough, and normalizing it to 1, the finite difference approximations for the derivatives are $\dot{\bm x}\approx \bm x_t-\bm x_{t-1},~~
    \ddot{\bm x}\approx \bm x_{t+1}-2\bm x_t+\bm x_{t-1}$ and can be used 
to obtain AR (2) GF equations for the samples $\Im\{\bm x_t\}$ in \eqref{eq:swing_eq}. 
\end{proof}
The model that we introduce is simply extending the GF-AR (2) model to capture both the real and imaginary part of $\bm{x}_t$  i.e. internal generator voltage magnitudes$\abs{\bm{e}}_t$ and angles $\bm{\delta}_t$ respectively and suggesting to search the $2|{\cal N}_{\cal G}|$ parameters to fit the model with $\tilde{\bm a}_1, \tilde{\bm a}_2$ rather than exploring a general MIMO filter response. 
For simplicity of representation, we write the dynamical equation for $\bm{x}_{t}$ in the GF domain,
\begin{align}
\tilde{\bm{x}}_{t} = \texttt{diag}\left( \tilde{\bm{a}}_{1} \right) \tilde{\bm{x}}_{t-1} + \texttt{diag}\left( \tilde{\bm{a}}_{2} \right) \tilde{\bm{x}}_{t-2} + \tilde{\bm{w}}_{t} \label{eq:time-GF}
\end{align}
such that the impulse response of the filter  at graph-frequency $\lambda_{\text{red},i}$ is defined by $\left[\tilde{\bm{a}}_{1}\right]_{i}, \left[\tilde{\bm{a}}_{2}\right]_{i} $. 
\begin{figure}[h]
	\centering
	\includegraphics[width= \columnwidth]{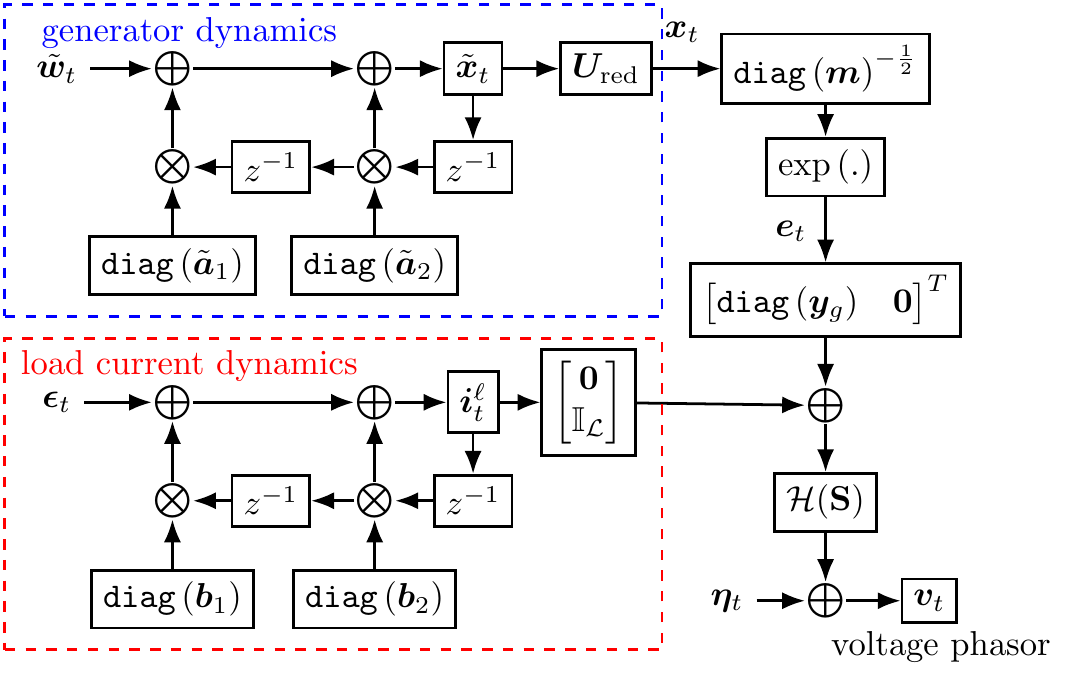}
	\caption{Block diagram showing generative model for voltage phasor measurements. }
	\label{fig:blk_diagram}
\end{figure}
Note from \eqref{eq:GF_h} that $\left[\tilde{\bm{a}}_{1}\right]_{i}, \left[\tilde{\bm{a}}_{2}\right]_{i} $ can be written as polynomials in graph frequency $\lambda_{\text{red},i}$,
\begin{align}
\textstyle \left[\tilde{\bm{a}}_{1}\right]_{i} = \sum_{k=0}^{K_1-1} a_{k,1} \lambda_{\text{red},i}^{k} ,  ~\left[\tilde{\bm{a}}_{2}\right]_{i} = \sum_{k=0}^{K_2-1} a_{k,2} \lambda_{\text{red},i}^{k},
\end{align}
which characterizes the poles of the AR system using graph frequencies. From \eqref{eq:swing_eq}, a first order polynomial may suffice to characterize $\left[\tilde{\bm{a}}_{1}\right]_{i}, \left[\tilde{\bm{a}}_{2}\right]_{i} $ in most cases. 
\subsection{Load dynamics: $\bm{i}^{\ell}_{t} $}\label{subsec:load_dynamics}
There are several papers in the literature that deal with load forecasting and modeling \cite{gao2012dynamic}. We adopt a simple AR-2 model per node or load bus to describe the dynamics of the load, 
\begin{align}
\bm{i}^{\ell}_{t} = \texttt{diag} \left( \bm{b}_{1} \right) \bm{i}^{\ell}_{t-1} + \texttt{diag} \left( \bm{b}_{2} \right) \bm{i}^{\ell}_{t-2} + \bm{\epsilon}_{t}
\end{align}
where parameters $\bm{b}_{1}, \bm{b}_{2}$ are estimated load data time series. 
The block diagram in Fig.\ref{fig:blk_diagram} summarize our modeling efforts.
\par The unique nature of voltage phasor measurements allows us to describe a similar model for any subset of measurements on a graph. This is discussed next. 
\subsection{Low-pass property of down-sampled voltage graph signal}\label{subsec:downsampled_signal}
 Let $\bm{v}_{\mathcal{M}}$ (time index $t$ is ignored for simplicity) be the down-sampled voltage graph signal where $\mathcal{M} \subset \mathcal{N} $ is the set of node indices at which measurements are available.
 \response{ It can be shown that \textit{any} down-sampled  graph signal with arbitrary graph frequency response  is low-pass in the reduced-graph  frequency domain.  
 It suggests that one can utilize all the methods for low-pass graph signals onto  down-sampled versions of the graph signal as well.  This is summarized in lemma \ref{prop:Kron_LP} below.
\begin{lem}
\label{prop:Kron_LP}	
Let $\bm{v}_{\mathcal{M}}$ be any graph signal  down-sampled in the vertex-domain with $\abs{\mathcal{M}}$ samples. Let the GSO defined with respect to the full graph $\mathbf{S}$ be invertible. Then, with the GSO defined with respect to the reduced-graph of $\mathcal{M}$ vertices  as $\mathbf{S}_{\text{red},\mathcal{M}} $,
 graph signal $\bm{v}_{\mathcal{M}}$ is the output of  low-pass graph filter $\mathcal{H}\left(  \mathbf{S}_{\text{red},\mathcal{M}} \right) \triangleq \mathbf{S}_{\text{red},\mathcal{M}}^{-1}$
\begin{align}
\bm{v}_{\mathcal{M}} = \mathcal{H}\left(  \mathbf{S}_{\text{red},\mathcal{M}} \right)  \bm{\varphi} 
\end{align}
where the GSO for the reduced-graph is given by Kron-reduction of $\mathbf{S}$,
$\mathbf{S}_{\text{red},\mathcal{M}} = \mathfrak{Sh}\left( \mathbf{S}, \mathbf{S}_{\mathcal{M}^{c}\mathcal{M}^{c}} \right)$.
\end{lem}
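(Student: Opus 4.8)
The plan is to obtain the reduced model by block Gaussian elimination of the unmeasured coordinates, which will surface the Kron reduction as the effective system matrix. First I would use the invertibility of $\mathbf{S}$ to write any full-graph signal as $\bm{v} = \mathbf{S}^{-1}\bm{\varphi}_0$ with $\bm{\varphi}_0 = \mathbf{S}\bm{v}$; since $\bm{\varphi}_0\mapsto\bm{v}$ is a bijection, letting $\bm{\varphi}_0$ range over all vectors is exactly what ``arbitrary graph frequency response'' means. Partitioning $\mathbf{S}$, $\bm{v}$ and $\bm{\varphi}_0$ along $\mathcal{M}$ and $\mathcal{M}^c$, the relation $\mathbf{S}\bm{v}=\bm{\varphi}_0$ splits into the two block rows $\mathbf{S}_{\mathcal{M}\mathcal{M}}\bm{v}_{\mathcal{M}}+\mathbf{S}_{\mathcal{M}\mathcal{M}^c}\bm{v}_{\mathcal{M}^c}=[\bm{\varphi}_0]_{\mathcal{M}}$ and $\mathbf{S}_{\mathcal{M}^c\mathcal{M}}\bm{v}_{\mathcal{M}}+\mathbf{S}_{\mathcal{M}^c\mathcal{M}^c}\bm{v}_{\mathcal{M}^c}=[\bm{\varphi}_0]_{\mathcal{M}^c}$.

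Next I would solve the second row for the unmeasured voltages $\bm{v}_{\mathcal{M}^c}=\mathbf{S}_{\mathcal{M}^c\mathcal{M}^c}^{-1}\left([\bm{\varphi}_0]_{\mathcal{M}^c}-\mathbf{S}_{\mathcal{M}^c\mathcal{M}}\bm{v}_{\mathcal{M}}\right)$ and substitute into the first; the coefficient of $\bm{v}_{\mathcal{M}}$ then collapses precisely to the Schur complement, giving
\begin{equation}
\mathbf{S}_{\text{red},\mathcal{M}}\,\bm{v}_{\mathcal{M}}=\bm{\varphi},\qquad \mathbf{S}_{\text{red},\mathcal{M}}=\mathfrak{Sh}\!\left(\mathbf{S},\mathbf{S}_{\mathcal{M}^c\mathcal{M}^c}\right),\nonumber
\end{equation}
with the effective excitation $\bm{\varphi}=[\bm{\varphi}_0]_{\mathcal{M}}-\mathbf{S}_{\mathcal{M}\mathcal{M}^c}\mathbf{S}_{\mathcal{M}^c\mathcal{M}^c}^{-1}[\bm{\varphi}_0]_{\mathcal{M}^c}$. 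Inverting the Schur complement immediately yields the claimed identity $\bm{v}_{\mathcal{M}}=\mathbf{S}_{\text{red},\mathcal{M}}^{-1}\bm{\varphi}=\mathcal{H}(\mathbf{S}_{\text{red},\mathcal{M}})\bm{\varphi}$.

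It then remains to certify the inversions and the low-pass character. Invertibility of the Schur complement follows from the factorization $\det\mathbf{S}=\det\mathbf{S}_{\mathcal{M}^c\mathcal{M}^c}\cdot\det\mathbf{S}_{\text{red},\mathcal{M}}$ together with the hypothesis $\abs{\lambda_{\text{min}}}>0$, once $\mathbf{S}_{\mathcal{M}^c\mathcal{M}^c}$ is known to be nonsingular. For the low-pass property I would use that the Schur complement of a complex symmetric matrix is again complex symmetric, so $\mathbf{S}_{\text{red},\mathcal{M}}$ admits the complex-orthogonal eigendecomposition of Theorem $4.4.13$ in \cite{Matrix_analysis}; its inverse then has graph-frequency response $\abs{\lambda}^{-1}$ over the reduced-graph spectrum, which decays as $\abs{\lambda}$ grows, exactly the low-pass argument already used for $\mathcal{H}(\mathbf{S})=\mathbf{S}^{-1}$. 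I expect the one genuine obstacle to be the invertibility of the principal block $\mathbf{S}_{\mathcal{M}^c\mathcal{M}^c}$, since the hypothesis only guarantees that $\mathbf{S}$ as a whole is invertible; I would close this gap by invoking the grid structure \eqref{eq:grid_GSO}, namely that $\mathbf{S}$ is an admittance (Laplacian-type) matrix plus a nonzero diagonal loading of generator and shunt admittances, so that the principal submatrix on $\mathcal{M}^c$ inherits that loading and is itself a nonsingular grounded-Laplacian, making the Kron reduction well defined.
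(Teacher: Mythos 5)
Your proposal is correct and follows essentially the same route as the paper: writing $\bm{v}=\mathbf{S}^{-1}\bm{\varphi}_0$ and eliminating the unmeasured block is exactly the block-matrix-inverse / Schur-complement computation the paper performs, and you arrive at the same reduced system $\mathbf{S}_{\text{red},\mathcal{M}}\bm{v}_{\mathcal{M}}=\bm{\varphi}$ with the same effective excitation $\bm{\varphi}=[\bm{\varphi}_0]_{\mathcal{M}}-\mathbf{S}_{\mathcal{M}\mathcal{M}^c}\mathbf{S}_{\mathcal{M}^c\mathcal{M}^c}^{-1}[\bm{\varphi}_0]_{\mathcal{M}^c}$. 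Your additional observation that the invertibility of the principal block $\mathbf{S}_{\mathcal{M}^c\mathcal{M}^c}$ is needed (and is left implicit in the paper, justified by the diagonal admittance loading in \eqref{eq:grid_GSO}) is a fair and worthwhile point, but it does not change the substance of the argument.
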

\begin{proof}
	Consider a graph signal $\bm{v}$ with arbitrary graph frequency response  with respect to  GSO $\mathbf{S}$,
	\begin{align}
	\bm{v}  = \mathcal{H} \left(  \mathbf{S}  \right) \bm{x}  = \mathbf{S}^{-1} \left(\mathbf{S} \mathcal{H} \left(  \mathbf{S}  \right) \bm{x}\right) \label{eq:arbitrary}
	\end{align}
	The GSO $\mathbf{S}$ is rewritten in a $2 \times 2 $ block form 
	\begin{align}
	\mathbf{S} = \begin{bmatrix}
	\mathbf{S}_{\mathcal{M}\mathcal{M}} & \mathbf{S}_{\mathcal{M}\mathcal{M}^{c}} \\
	\mathbf{S}_{\mathcal{M}\mathcal{M}^{c}}^{\top} & \mathbf{S}_{\mathcal{M}^{c}\mathcal{M}^{c}}
	\end{bmatrix},
	\end{align}
	and $\mathbf{S}^{-1}$ can be written using  inverse formula for block matrices. When graph signal $\bm{v}$ is down-sampled, 
only $\mathcal{M}$ rows are considered on both sides of \eqref{eq:arbitrary}. Thus we have, 
	\begin{align}
	\bm{v}_{\mathcal{M}} &=  \mathbf{S}_{\text{red},\mathcal{M}}^{-1} \overbrace{\begin{bmatrix}
	\mathbb{I}_{\abs{\mathcal{M}}} & - \mathbf{S}_{\mathcal{M}\mathcal{M}^{c}} \mathbf{S}_{\mathcal{M}^{c}\mathcal{M}^{c}}^{-1}  
	\end{bmatrix}  \left(\mathbf{S} \mathcal{H} \left(  \mathbf{S}  \right) \bm{x}\right) }^{\bm{\varphi}}\label{eq:kron_red_GSO_for_placement} 
	\end{align}
	where $\mathbf{S}_{\text{red},\mathcal{M}}$ is the Schur complement of the block $\mathbf{S}_{\mathcal{M}^{c}\mathcal{M}^{c}}$ in the GSO $\mathbf{S}$ i.e., 
	\begin{align}
	  \!\!\! \mathbf{S}_{\text{red},\mathcal{M}}\!=\!\mathfrak{Sh}\!\left( \mathbf{S}, \!\mathbf{S}_{\mathcal{M}^{c}\mathcal{M}^{c}}\! \right) \!=\!\mathbf{S}_{\mathcal{M}\mathcal{M}}\!-\!\mathbf{S}_{\mathcal{M}\mathcal{M}^{c}}\mathbf{S}_{\mathcal{M}^{c}\mathcal{M}^{c}}^{-1}  \mathbf{S}_{\mathcal{M}\mathcal{M}^{c}}^{\top} ~~~~ \qedhere 
	\end{align}  
\end{proof} }
\response{
Lemma \ref{prop:Kron_LP}  translates to an interesting self-similarity/fractional property  for voltage graph signals in that the down-sampled version $\bm{v}_{\mathcal{M}}$ is still a \textit{low-pass graph signal}. 
The self-similarity is due to $ \mathcal{H} \left(  \mathbf{S}  \right) = \mathbb{I}$. In summary, for voltage graph signals,  
	\begin{align}
	\bm{v} = \mathbf{S}^{-1} \bm{i}, ~~ \bm{v}_{\mathcal{M}} = \mathbf{S}_{\text{red},\mathcal{M}}^{-1} \left(\bm{i}_{\mathcal{M}} \!-\!  \mathbf{S}_{\mathcal{M}\mathcal{M}^{c}} \mathbf{S}_{\mathcal{M}^{c}\mathcal{M}^{c}}^{-1} \bm{i}_{\mathcal{M}^{c}} \right)
	\end{align}
In the power grid, this property has been illustrated empirically in several papers \cite{Wang2015,Wang2017} that highlight low-dimensionality of   measurements from a subset of buses. 
Although the reduced-graph is denser compared to the original graph, it still helps to infer faults or events that occurred in a subset of nodes where sensors are not installed as long as correct placement strategies are devised i.e. that of choosing the subset $\mathcal{M}$. Work in \cite{JSAC} explored the optimal placement for fault localization in the under-sampled regime and also made connections with GSP theory. }
\section{Revisiting algorithms from GSP for PMU data}\label{sec:GSP_algos_PMU_data}
In this section we  study some of the implications Grid-GSP has while understanding sampling, optimal placement of measurement devices in power systems, interpolation of missing samples and network inference. 
The underlying generative model responsible for low-rank nature of  data that has been established in the previous section helps explaining the success that many past works, such as \cite{Wang2017,Dahal2012,ADMM_PMU}, have attained in recovering missing PMU data using matrix completion methods.
The low-pass nature of the voltage graph signals discussed in Section \ref{sec:gridGSP} provides the theoretical underpinning that support the arguments made in the literature. 
\vspace{-0.75em}
\subsection{Sampling and recovery of grid-graph signals}\label{subsec:sampling_PMU}
 From the approximation in \eqref{eq:generative_model} we see that voltage  graph signals have graph frequency content that drops as $\lambda_{k}$ grows. This characteristic renders the signal approximately  band-limited in the GFT domain  \cite{tsitsvero2016signals} which means that there is a cut-off frequency $\lambda_{k}$ such that frequency content corresponding to $\lambda_{k+1}$ and higher is negligible. Let the GFT basis corresponding to the first dominant $k$ graph frequecies be $\mathbf{U}_{\mathcal{K}}$. 
 The {\it bandlimiting operator} is, $\bm{\mathcal{B}}_{\mathcal{K}} = \mathbf{U}_{\mathcal{K}} \mathbf{U}^{\top}_{\mathcal{K}} \in \mathbb{C}^{\abs{\mathcal{N}} \times \mathcal{K}}$ and the low frequency component of $\bm{v}_{t}$ is:
\begin{align}
\bm{\mathcal{B}}_{\mathcal{K}} \bm{v}_{t} = \mathbf{U}_{\mathcal{K}} \mathbf{U}^{\top}_{\mathcal{K}}  \bm{v}_{t}
\end{align}
Similarly, a vertex limiting operator (with $\abs{\mathcal{M}}$) vertices is $\bm{\mathcal{D}}_{\mathcal{M}} = \bm{P}_{{\mathcal{M}}}\bm{P}^{\top}_{{\mathcal{M}}}$ where $\bm{P}_{{\mathcal{M}}}$ has columns that are coordinate vectors such that each column chooses a vertex/node. When the voltage measurements on the electrical network are from a few nodes, $i \in \mathcal{M}$ at time $t$, it can be written as $\left[\bm{v}_{t}\right]_{\mathcal{M}} = \bm{P}^{\top}_{{\mathcal{M}}}\bm{v}_{t} $. 
For reconstruction, results in \cite{tsitsvero2016signals} dictate the necessary condition be that $\abs{\mathcal{M}} \geq \abs{\mathcal{K}}$. 
In the presence of  modeling error relative to the perfect band-limited definition,  optimal sampling pattern i.e. the best placement for PMUs on the grid to minimize the worst-case reconstruction error  is closely tied to the grid topology and the model mismatch relative to a strictly band-limited graph signal \cite{Anis2016}. 
An optimal placement strategy of PMUs that minimizes the worst-case reconstruction error in the presence of model mismatch due to imperfect band-limited nature of the voltage graph signal, also known as the E-optimal design\cite{Anis2016}, is sought by  maximizing the smallest singular value, $\sigma_{\min} \left(\bm{\mathcal{D}}_{\mathcal{M}} \mathbf{U}_{\mathcal{K}}\right)$,  i.e. choose rows of $\mathbf{U}_{\mathcal{K}}$ such that they are as uncorrelated as possible and the resulting matrix has the highest condition number \cite{tsitsvero2016signals,Anis2016}.
Consider then the spatial sampling mask $\bm{\mathcal{D}}_{\mathcal{M}} = \texttt{diag}(\bm{1}_{\mathcal{M}})$ that selects $M$ locations.  
\subsubsection{Sampling}
The optimal placement of $M$ PMUs maximizes $\sigma_{\min}(\bm{\mathcal{D}}_{\mathcal{M}} {\mathbf{U}}_{\mathcal{K}})$ which amounts to choosing the rows of $\mathbf{U}_{\mathcal{K}}$ with the smallest possible coherence (as close as possible to being orthogonal). 
In \cite{tsitsvero2016signals} and references therein, a greedy method is employed to find $M$ rows from ${\mathbf{U}}_{\mathcal{K}}$ so that the least singular value is maximized. 
\par Power systems topologies exhibit naturally a {\it community structure} that is reflected in the system admittance matrix $\bm{Y}$  \cite{DSW2019} due to population density or clusters of loads. \response{It is known that a method to determine $k$ communities in a graph is to minimize the Ratio Cut \cite{HagenKahn1992} and spectral clustering  performs a relaxed  Ratio Cut minimization  via $k-$means algorithm on rows of  the eigenvectors $\mathbf{U}_{\mathcal{K}}$ \cite{spectral_clustering_tutorial}.}  Thus, choosing rows of $\mathbf{U}_{\mathcal{K}}$ to be uncorrelated is intuitively putting PMUs in different {\it graph-clusters} or  communities. This fact was also discussed in \cite{JSAC} in the context of sensor placements for fault localization. 
The PMUs sampling rate in time exceeds the needs for reconstructions in a quasi-steady state conditions by a significant margin and it is designed to help detect sharp transients in the system.
\subsubsection{Reconstruction} \label{subsubsec: reconstruction_sampling_PMU}
Voltage data samples are obtained  down-sampling in space after the optimal placement of PMUs and also uniformly down-sampling in time. At time $t$ when $\abs{\mathcal{M}}$ samples, $\left[\bm{v}_t\right]_{\mathcal{M}}$ are available, the following model applies 
\begin{align}
\left[\bm{v}_t\right]_{\mathcal{M}} \approx \bm{P}_{{\mathcal{M}}}^{\top} \mathbf{U}_{\mathcal{K}} \tilde{\bm{v}}_{t}
\end{align}
where $\tilde{\bm{v}}_{t}$ is the GFT of graph signal $\bm{v}_{t} $ 
Therefore, reconstruction in spatial domain is done via GFT basis as
\begin{align}
\hat{\bm{v}}_{t} = \mathbf{U}_{\mathcal{K}} \left(\bm{P}_{{\mathcal{M}}}^{\top} \mathbf{U}_{\mathcal{K}} \right)^{\dagger} \left[\bm{v}_t\right]_{\mathcal{M}}
\end{align}
Reconstruction in temporal domain can be done independently by \textit{up-sampling}, i.e. via the windowed inverse Fourier transform of the up-sampled signal created from uniformly time-decimated data. 
\vspace{-0.75em}
\subsection{Interpolation of missing samples}\label{subsec:interpolation}
When voltage measurements are missing or corrupted, denoising and interpolation of such data can be cast as a graph signal recovery problem by regularizing the total variation, (TV).  Overall, the problem resembles time-vertex graph signal recovery \cite{JFT}. Let $\mathbf{V} = \begin{bmatrix}
\bm{v}_{1} & \bm{v}_{2} & \dots & \bm{v}_{T}
\end{bmatrix}$ represent the voltage phasor measurements matrix collected over $T$ time instants. Let $\mathbb{P}_{\Omega}\left(\hat{\mathbf{V}}\right)$ be the set of available measurements that have samples in entries of set $\Omega$ and are noisy,
\begin{align}
\min_{\mathbf{V}} ~~ &\textstyle \norm{\mathbb{P}_{\Omega}\left(\hat{\mathbf{V}}\!-\!\mathbf{V}\right)}_{F}^{2}\! +\! c_g \sum_{t=1}^{T} \norm{\mathbf{S}\bm{v}_{t}}_{1} \nonumber \\
&+ \textstyle c_t\sum_{t=2}^{T}
\norm{\bm{v}_{t}  - \bm{v}_{t-1}}_{2}^{2} \label{eq:interp_GSO}
\end{align} 
where the two regularizing terms measure the variation in the graph and time domain and $c_g,c_t$ are the corresponding regularization constants. 
Importantly,  one can use the  GSO of the reduced graph, $\mathbf{S}_{\text{red},\mathcal{M}}$ if we only have access to a subset of measurements on the grid, $\mathcal{M}$ and employ the same formulation as in \eqref{eq:interp_GSO} for interpolation of missing samples. 
\vspace{-0.75em}
\subsection{Network inference as graph Laplacian learning} \label{subsec:network_inference}
The problem of estimation of GSO  $\mathbf{S}$ from voltage  phasor measurements can be cast as a solving a problem similar to graph Laplacian learning \cite{dong2019learning}  which seeks the GSO that minimizes the total variation of the observed voltage phasors. If current measurements $\bm{i}_{t}$ are available, then another regularization term $\norm{\mathbf{S}\bm{v}_{t} - \bm{i}_{t} }_{2}^{2}$ can be added such that Ohm's law is satisfied. 
Therefore, estimation of GSO can be accomplished by solving the following problem:
\begin{align}
    & \min_{\mathbf{S}}~~~~ \textstyle  \sum_{t=1}^{T} \norm{\mathbf{S}\bm{v}_{t}}_{1}  + \gamma \norm{\mathbf{S} - \texttt{diag}\left(  \texttt{Diag}\left(\mathbf{S}\right)  \right)}_{F}^{2} \nonumber\\
    &~~~~~~~~+ \sum_{t=1}^{T} \norm{\mathbf{S}\bm{v}_{t} - \bm{i}_{t} }_{2}^{2} \label{eq:nw_inference}\\
    &\text{subject to} ~~ \Re{\left[\mathbf{S}\right]_{i,j}} = \Re{\left[\mathbf{S}\right]_{j,i}}, \Im{\left[\mathbf{S}\right]_{i,j}} = \Im{\left[\mathbf{S}\right]_{j,i}}, i\neq j, \label{eq:complex_symmetry}\\
    & \Re {\text{Tr}\left(\mathbf{S}\right)} = \alpha \abs{\mathcal{N}}, \Im {\text{Tr}\left(\mathbf{S}\right)} = \beta \abs{\mathcal{N}} \label{eq:dominant_diag}
\end{align}
Additional constraints on the GSO  can be imposed based on the properties of 
complex-symmetry (see \eqref{eq:complex_symmetry}),  sparse off-diagonal entries via the term $\norm{\mathbf{S} - \texttt{diag}\left(  \texttt{Diag}\left(\mathbf{S}\right)  \right)}_{F}^{2}$ and dominant diagonal values (see \eqref{eq:dominant_diag}).  
Also, $\mathbf{S}$ tends to have larger imaginary values than real especially on the diagonal. 
and $\alpha, \beta >1 $ control the amplitude of real and imaginary values on the diagonal.  
As before, the problem above can be recast with down-sampled voltage graph signals to infer the Kron-reduced GSO $\mathbf{S}_{\text{red},\mathcal{M}}$ with the approximation that the term $\mathbf{S}_{\mathcal{M}\mathcal{M}^{c}} \mathbf{S}_{\mathcal{M}^{c}\mathcal{M}^{c}}^{-1} \bm{i}_{\mathcal{M}^{c}}$ in \eqref{eq:kron_red_GSO_for_placement} is treated as additive Gaussian noise.  Simulation results for network inference can be found in Section \ref{sec:results}.  
\section{Applications of Grid-GSP}\label{sec:appl_Grid_GSP}
The goal of this section is to showcase  the benefits of casting problems in the Grid-GSP framework through two exemplary applications, namely anomaly detection and data compression. The common thread between them is the use of the Grid-GFT as a tool to extract informative features from PMU data.
\subsection{Detection of FDI attacks on PMU measurements}\label{subsec:FDI_attacks}
This application is based on our preliminary work in \cite{GlobalSIP2019}. Note that, even though we cast the problem as that of FDI attacks detection, the idea can be easily extended to unveil sudden changes due to  physical events (like fault-currents, or topology changes) that similarly excite high GF content.
We assume that we have access to PMU measurements of voltage and current from the buses they are installed on.  
Let $\mathcal{A}$ be the set of available measurements where PMUs are installed and $\mathcal{U}$ be set of unavailable ones. A measurement model can be written using `state' to be the voltage as
\begin{align}
\underbrace{\begin{bmatrix}
	\hat{\bm{i}}_{\mathcal{A}} \\
	\hat{\bm{v}}_{\mathcal{A}}
	\end{bmatrix}}_{\bm{z}_t} = \underbrace{\begin{bmatrix}
	\bm{Y}_{\mathcal{A}\mathcal{A}} & \bm{Y}_{\mathcal{A}\mathcal{U}} \\
	\mathbb{I}_{\abs{\mathcal{A}  }} & \bm{0}
	\end{bmatrix}}_{\bm{H}} \underbrace{\begin{bmatrix}
	\bm{v}_{\mathcal{A}} \\
	\bm{v}_{\mathcal{U}}
	\end{bmatrix}}_{\bm{v}}  + \bm{\varepsilon}  \label{eq:meas_model}
\end{align} 
The attacker follows the strategy of FDI attack to manipulate both current and voltage on the set of malicious buses,  $ i \in \mathcal{C} \subset \mathcal{A}$ by introducing a perturbation 
\begin{align}
\delta \bm{v}^{T}_t = \begin{bmatrix}
\delta \bm{v}_{\mathcal{C} }^{T}  & \bm{0}_{\abs{\mathcal{P}}+\abs{\mathcal{U}}}^{T}
\end{bmatrix}, ~\text{such that}~
\bm{Y}_{\mathcal{P}\mathcal{C}} \delta \bm{v}_{\mathcal{C}} = \bm{0} \label{eq:delta_v}
\end{align}
where $\mathcal{P}$ is the set of honest nodes. This requires special conditions and placement, since  $\bm{Y}_{\mathcal{P}\mathcal{C}}$ is tall.
Nonetheless, since the system admittance matrix $\bm{Y}$ is generally sparse \cite{sato1963techniques}, $\bm{Y}_{\mathcal{P}\mathcal{C}}$ does not have full column-rank for a sufficient number of attackers $\mathcal{C}$ even when all the measurements are available with $\mathcal{A} = \mathcal{N}$.
Our detection problem entails deciding between the hypotheses of attack $\textit{H}_{1}$ and no attack $\textit{H}_{0}$.
To this end, we can leverage the low-dimensional generative model for the voltage graph signal that comes from \eqref{eq:generative_model}, which imposes additional constraint on the perturbation along with that in \eqref{eq:delta_v}. In short, for the attacker to be successful and undetected, she needs to have knowledge of system parameters and the graph filter with $k$ frequency components $\mathcal{H}_{k}( \mathbf{S})$. However, since the attacker does not have all this knowledge, a typical FDI attack as studied in literature is launched using \eqref{eq:delta_v}.  
Using the generative model in \eqref{eq:generative_model}, we know that under normal operating conditions in quasi-steady state, 
the received data $\bm{z}_t$ under the no-attack and attack hypotheses $\textit{H}_{0}, \textit{H}_{1}$ respectively have the structure:
\begin{align}
\hspace{-0.68em}\bm{z}_t \!=\!\! \begin{cases}
\!\textit{H}_{0}:\!\!&\hspace{-0.8em} \bm{H} \mathcal{H}_{k}( \mathbf{S})\!\! \begin{bmatrix}
\left(\texttt{diag}(\bm y_g)\bm{e}_t\right)^{\top} &
\left(\bm{i}^{\ell}_t\right)^{\top}
\end{bmatrix}^{\top} \!\!+\! \bm{\varepsilon}_t  \\
\!\textit{H}_{1}:\!\!& \hspace{-0.8em} \bm{H}\mathcal{H}_{k}( \mathbf{S})\!\! \begin{bmatrix}
\left(\texttt{diag}(\bm y_g)\bm{e}_t\right)^{\top} &
\left(\bm{i}^{\ell}_t\right)^{\top}
\end{bmatrix}^{\top}\!\!\!\! +\!\! \bm{H}\delta \bm{v}_t\!+\! \bm{\varepsilon}_t 
\end{cases}
\end{align}
Therefore, we project $\bm{z}_t$ onto the subspace orthogonal to columnspace of $ \bm{H} \mathcal{H}_{k}( \mathbf{S})$ to get a test statistic, $d({\bm z})$. 
The projector is:
\begin{align}\label{eq:FDI-filter}
\bm{\Pi}_{\perp \bm{H} \mathcal{H}_{k}( \mathbf{S}) }  \triangleq \mathbb{I} - \left[\bm{H}\mathcal{H}_{k}( \mathbf{S})\right]\left[ \bm{H}\mathcal{H}_{k}( \mathbf{S})\right]^{\dagger}
\end{align} 
and under the no attack hypothesis $\textit{H}_{0}$, energy in the orthogonal subspace
is less than when there is an attack, $\textit{H}_{1}$. This can be converted to the following test,
\begin{align}
d(\bm{z}_t)  \triangleq  \norm{ \bm{\Pi}_{\perp \bm{H} \mathcal{H}_{k}( \mathbf{S}) }  \bm{z}_t}_{2}^{2} ~~ \mathop{\gtreqless}_{\textit{H}_{0}}^{\textit{H}_{1}}~~ \tau
\end{align}
where $\tau$ is a threshold that can be chosen based on an empirical receiver operator characteristics (ROC) curve. 
Note that, since $\mathcal{H}_{k}( \mathbf{S})$ is a low pass filter, the projector $\bm{\Pi}_{\perp \bm{H} \mathcal{H}_{k}( \mathbf{S}) }$ in  \eqref{eq:FDI-filter} is filtering high graph frequencies and the detection measures the energy on such frequencies as a signature for anomalies. 

\par Isolation of compromised buses or estimate of $\delta{\bm{v}}$ can also undertaken with a similar logic. Firstly, using the assumptions in the previous section we can  solve the following regression problem to recover $\delta\bm{v}_t$, formulating a LASSO relaxation of the sparse support recovery problem:
\begin{align}
\hspace{-0.6em} \min_{\delta{\bm{v}_t}} ~ \norm{\bm{\Pi}_{\perp \bm{H} \mathcal{H}_{k}( \mathbf{S}) } ({\bm z}\!-\! \bm{H}\delta{\bm{v}})}^{2}_{2} ~~  \text{subject to} ~~  \norm{\delta{\bm{v}_t}}_{1} \!\leq\! \mu \label{eq: regression_isolation_of_nodes}
\end{align}
Constraint on the $\ell_{1}$ norm is used to incorporate the prior knowledge that the attacker has access to a few measurement buses, $\mathcal{C} \ll \mathcal{N}$. 
Note that the performance of the algorithm is also dependent on the number of graph-frequency components i.e. $k$ considered. 
\vspace{-0.75em}
\subsection{ Compression of PMU measurements} \label{subsec:compression}
 The  proposed compression algorithm leverages both \eqref{eq:LPmodel} and \eqref{eq:time-GF}. The measure of distortion we use is the mean-squared error (MSE): 
 \begin{align}
 \textstyle d(\bm{v},\hat{\bm{v}}) \triangleq (\abs{\mathcal{N}}T)^{-1} \sum_{t=0}^{T} \norm{\bm{v}_{t} - \hat{\bm{v}}_{t}}_{2}^{2}
 \end{align}
 where $T$ denotes  the time instant at which samples are stopped collecting.
   \begin{algorithm}
  	\caption{Encoding algorithm for compression}\label{alg:encoding}
  	\begin{algorithmic}[1]
  		\Require $ {\tilde{\bm{x}}}_{0}, {\tilde{\bm{x}}}_{1}, \bm{i}_{0}^{\ell},\bm{i}_{1}^{\ell}, \left\{  \bm{v}_{t} \right\}_{t=2}^{T} $
  		\For{$t=2:T$}
  		\State States $\hat{\tilde{\bm{x}}}_t, \hat{\bm{i}}_{t}^{\ell}$ from \eqref{eq:x_update_compress} and \eqref{eq: i_update_compress} respectively. 
  		\State  Voltage estimate:
  		\begin{align}
  		\hspace{-0.3em}\hat{\bm{v}}^{0}_{t} \!=\! \mathcal{H}( \mathbf{S})\! \begin{bmatrix}
  		\texttt{diag}\!\left(\bm{y}_{g}\right) \exp \left\{ \!\texttt{diag}(\bm{m})^{-\frac{1}{2}}\mathbf{U}_{\text{red}} \hat{\tilde{\bm{x}}}_{t}  \!        \right\} \\
  		\hat{\bm{i}}^{\ell}_{t}
  		\end{bmatrix} \label{eq:v_0_compress}
  		\end{align}
  		\State Compute  GFT of modeling error:  		
  		 $\tilde{\bm{\xi}}_{t} = \mathbf{U}^{\top} \left(\bm{v}_{t} - \hat{\bm{v}}^{0}_t \right)$  		
  		 \State Quantize: $\hat{\tilde{\bm{\xi}}}_{t} = \mathcal{Q}\left\{ \tilde{\bm{\xi}}_{t}   \right\}, ~ \hat{\bm{v}}_t = \hat{\bm{v}}^{0}_t +  \mathbf{U}\hat{\tilde{\bm{\xi}}}_{t}  $
  		\State Update states,
  		\begin{align}
  		&\hat{\tilde{\bm{x}}}_t \leftarrow \mathbf{U}_{\text{red}}^{\top} \texttt{diag}(\bm{m})^{\frac{1}{2}} \ln (\hat{\bm{e}}_{t}), \hat{\bm{i}}^{\ell}_{t} \leftarrow  \left[\mathbf{S} \hat{\bm{v}}_t\right]_{\mathcal{N}_L} \label{eq: state_update_compress} \\
  	&\text{where} ~\hat{\bm{e}}_{t} =  \left(\texttt{diag}(\bm{y}_g) \right)^{-1}  \left[\mathbf{S} \hat{\bm{v}}_t \right]_{\mathcal{N}_G} \nonumber  		
  		\end{align}
  		\EndFor
  		\State \textbf{end for}
  		\Ensure 
  		 $  \left\{ \hat{\tilde{\bm{\xi}}}_{t} \right\}_{t=2}^{T}$
  	\end{algorithmic}
  \end{algorithm}
 Since we have a temporal dynamical model for the evolution of voltage in time, we use differential encoding \cite{diff_enc} to quantize the residuals in both generator and load dynamics, $\tilde{\bm{w}}_{t}$ and $\bm{\epsilon}_{t}$ respectively. The  voltage at time $t$ is: 
 \vspace{-0.1em}
 \begin{align}
 &{\bm{v}}_{t} =  \mathcal{H}( \mathbf{S}) \begin{bmatrix}
 \texttt{diag}\left(\bm{y}_{g}\right)\exp \left\{ \texttt{diag}(\bm{m})^{-\frac{1}{2}}\mathbf{U}_{\text{red}} \left({\tilde{\bm{x}}}^{0}_{t} +\tilde{\bm{w}}_{t}\right)           \right\}  \\
 {\bm{i}}^{\ell,0}_{t} +  \bm{\epsilon}_{t}
 \end{bmatrix} \nonumber \\
  & \text{where}~~~~	\tilde{\bm{x}}_{t}^{0}  =  {\texttt{diag} \left( \bm{a}_{1}\right ) {\tilde{\bm{x}}}_{t-1}\! +  \texttt{diag} \left( \bm{a}_{2} \right )} {\tilde{\bm{x}}}_{t-2}  \label{eq:x_update_compress}\\
  &~~~~~~~~~~~{\bm{i}}^{\ell,0}_{t}  = 	\texttt{diag} \left( \bm{b}_{1}\right ) \hat{\bm{i}}_{t-1}^{\ell}\! +  \texttt{diag} \left( \bm{b}_{2} \right )
  		    \hat{\bm{i}}_{t-2}^{\ell}  \label{eq: i_update_compress}
 \end{align}
 Thus, $\bm{v}_{t}$ can be approximated as:
 \begin{align}
\hspace{-0.8em}{\bm{v}}_{t}\!\!\approx \! \mathcal{H}( \mathbf{S}) \!\left(\!\begin{bmatrix}
 \!\texttt{diag}\left(\!\bm{y}_{g}\!\right)\exp \left\{ \!\texttt{diag}(\bm{m})^{-\frac{1}{2}}\mathbf{U}_{\text{red}} {\tilde{\bm{x}}}^{0}_{t}          \right\}  \\
 {\bm{i}}^{\ell,0}_{t} 
 \end{bmatrix} \!\!\!+\! \bm{\xi}_t \!\!\right) \label{eq:xi_t}  
\end{align}
 Note that, the vector $\bm{\xi}_{t}$ GFT, $\mathbf{U}^{\top} \bm{\xi}_{t}$ has energy mostly in lower frequency components and is therefore an appropriate term to quantize using an optimal rate allocation. Specifically, we allocate bits to each component by setting a desired level of total distortion, applying the \textit{reverse water-filling} result \cite{CoverThomas:Book} which is optimum for a random vector  $\mathbf{U}^{\top} \bm{\xi}_{t}$ whose entries are circularly symmetric complex independent Gaussian random variables and then quantize the components accordingly.\footnote{The covariance matrix is not diagonal and ideally one would first whiten the vector $\mathbf{U}^{\top} \bm{\xi}_{t}$ and then quantize the individual components with bit-allocation akin to reverse water-filling. 
 Since the statistics of $\mathbf{U}^{\top} \bm{\xi}_{t}$ are time-varying, one has to perform the whitening transform at each time instant which is a  cumbersome operation. Therefore we make the assumption of a diagonal covariance matrix while sacrificing the benefit of modeling the underlying correlations among the random variables. } 
Then, we use the quantized vector $\mathbf{U}^{\top} \bm{\xi}_{t}$ to \textit{update} the state i.e. to estimate $\tilde{\bm{x}}_{t}$ and $\bm{i}^{\ell}_{t}$. 
Algorithms \ref{alg:encoding} and \ref{alg:decoding} describe the encoding and decoding algorithms respectively. 
  \begin{algorithm}
  	\caption{Decoding algorithm for reconstruction}\label{alg:decoding}
  	\begin{algorithmic}[1]
  		\Require $\tilde{\bm{x}}_{0}, \tilde{\bm{x}}_{1}, \bm{i}_{0}^{\ell},\bm{i}_{1}^{\ell}, $  $\left\{ \hat{\tilde{\bm{\xi}}}_{t} \right\}_{t=2}^{T}$
  			\For{$t=2:T$}
  			\State States $\hat{\tilde{\bm{x}}}_t, \hat{\bm{i}}_{t}^{\ell}$ from \eqref{eq:x_update_compress} and \eqref{eq: i_update_compress} respectively. 
  		\State Reconstruction of voltage from \eqref{eq:v_0_compress}  		
  		$\hat{\bm{v}}_{t} =  \hat{\bm{v}}_{t}^{0} +  \mathbf{U} \hat{\tilde{\bm{\xi}}}_{t}$  		
  		\State Update states $\hat{\tilde{\bm{x}}}_t, \hat{\bm{i}}_{t}^{\ell}$ from \eqref{eq: state_update_compress}
   		\EndFor
  		\State \textbf{end for}
  		\Ensure $ \left\{\hat{\bm{v}}_{t} \right\}_{t=2}^{T} $
  	\end{algorithmic}
  \end{algorithm}
 Note that the proposed scheme of compression is sequential unlike others in literature. Several corrections can be made as data is collected in time such as the update of parameters  $ \tilde{\bm{a}}_{1}, \tilde{\bm{a}}_{2},\bm{b}_{1}, \bm{b}_{2}$.
\begin{figure}
	\centering
	\includegraphics[width=0.62\columnwidth]{./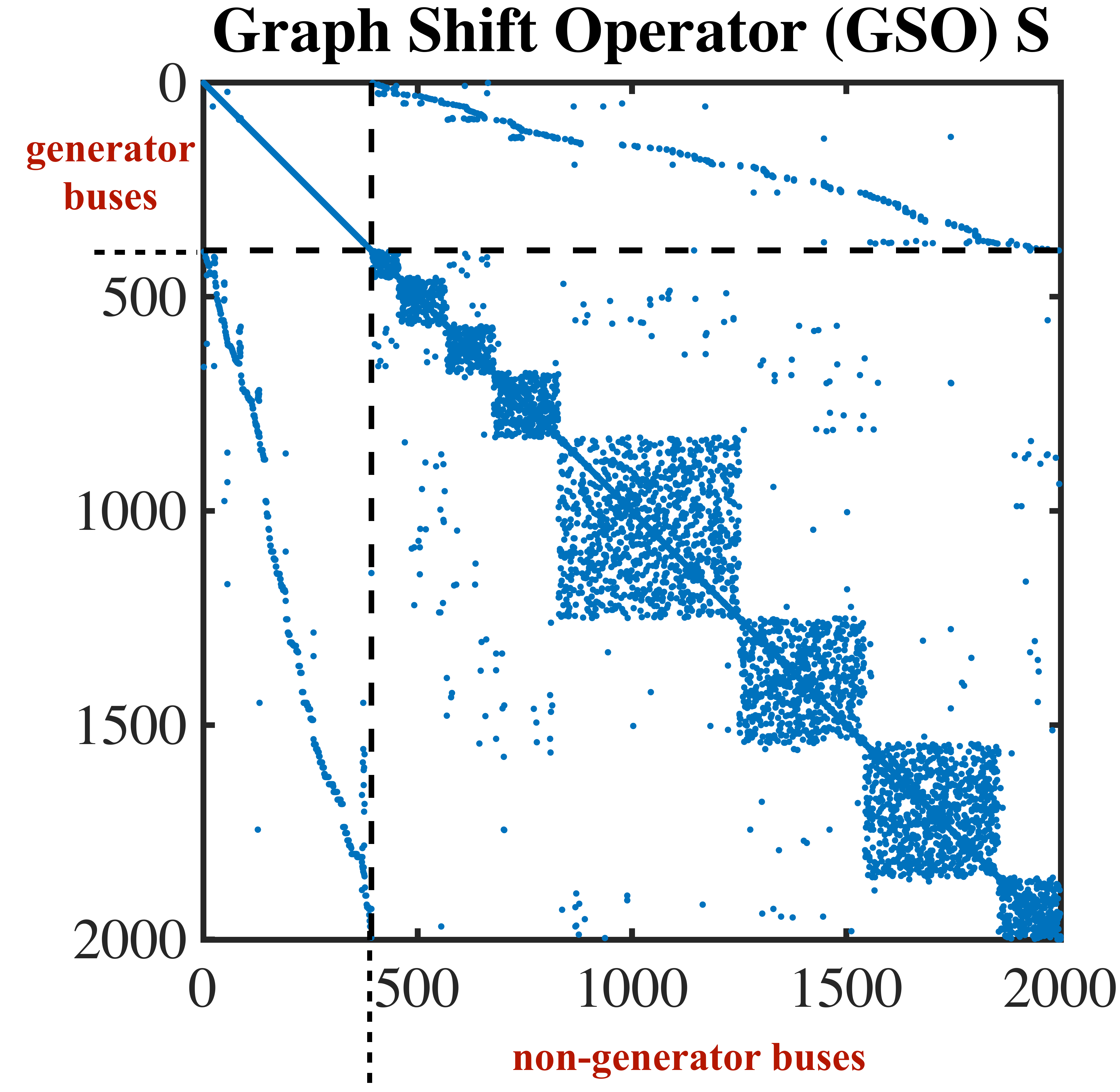}
	\caption{Support of the GSO $\mathbf{S}$ of the network.  }
	\label{fig:Ordered_Laplacian_matrix}
\end{figure}
\section{Numerical Results}\label{sec:results}
The numerical results in this section are mostly obtained using data from the synthetic ACTIVSg2000 case \cite{BirchfieldACTIVSg}, a realistic model emulating the ERCOT system, which includes $2,000$ buses-with $432$ generators and the rest non-generator buses. The ACTIVSg2000 case data include a realistic PMU data time series, in which $392$ generators are dispatched to meet variable load demand. 
The sampling rate, as for real PMUs, is $30$ samples per second. 
As all the system related parameters are known, it is easier to verify the proposed modeling strategy through the ACTIVSg2000 PMU data set. 
 Fig.~\ref{fig:Ordered_Laplacian_matrix} shows the support of the graph Laplacian or the $\bm{Y}$ matrix when ordered into generator and non-generator buses. The block-diagonal structure is notable, and is the result of the population distribution in the state of Texas, which is concentrated in $8$ metropolitan areas. 
  \par \underline{\it Grid-GSP model}: In Fig.~\ref{fig:PMU_GFT_gen}, magnitude of GFT of voltage graph signal $\bm{v}_{t}$ and the input $\bm{x}_t$ are plotted for  a single time instant  with respect to their corresponding normalized graph frequencies $\abs{\lambda_{i}}/\max_i \abs{\lambda_{i}}$ and shown in log-scale. 
  From the linear decay, it is evident that the magnitude of GFT coefficients $\abs{\tilde{\bm{v}}_{t}}$ corresponding to lower frequencies are more significant as compared to higher frequencies. 
   Similarly, the GFT of the exponent in the input, $\tilde{\bm{x}}_{t} = \mathbf{U}_{\text{red}}^{\top} \bm{x}_t$ with the generator GSO $\mathbf{S}_{\text{red}}$, is plotted with respect to the graph frequencies in Fig. \ref{fig:PMU_GFT_gen}. The decay in GFT coefficients with respect to frequency is less pronounced confirming that graph signal $\bm{x}_{t}$ is not necessarily low-pass and in general depends on the topology of the generator only network.  
  
  In Fig. \ref{fig:PMU_GFT_downsampled}, magnitude of GFT of the downsampled voltage graph signal, $ \tilde{\bm{v}}_{\mathcal{M}} = \mathbf{U}^{\top}_{\text{red},\mathcal{M}}{\bm{v}}_{\mathcal{M}} $ for $\abs{\mathcal{M}} = 867$ and $\abs{\mathcal{M}} = 261$ with two different down-sampling strategies: with PMUs placed at buses in few communities within the GSO $\mathbf{S}$  and the other being optimal placement for graph signal reconstruction. 
   \begin{figure}
 	\centering
 	\includegraphics[width=\columnwidth,height= 0.15\textheight]{./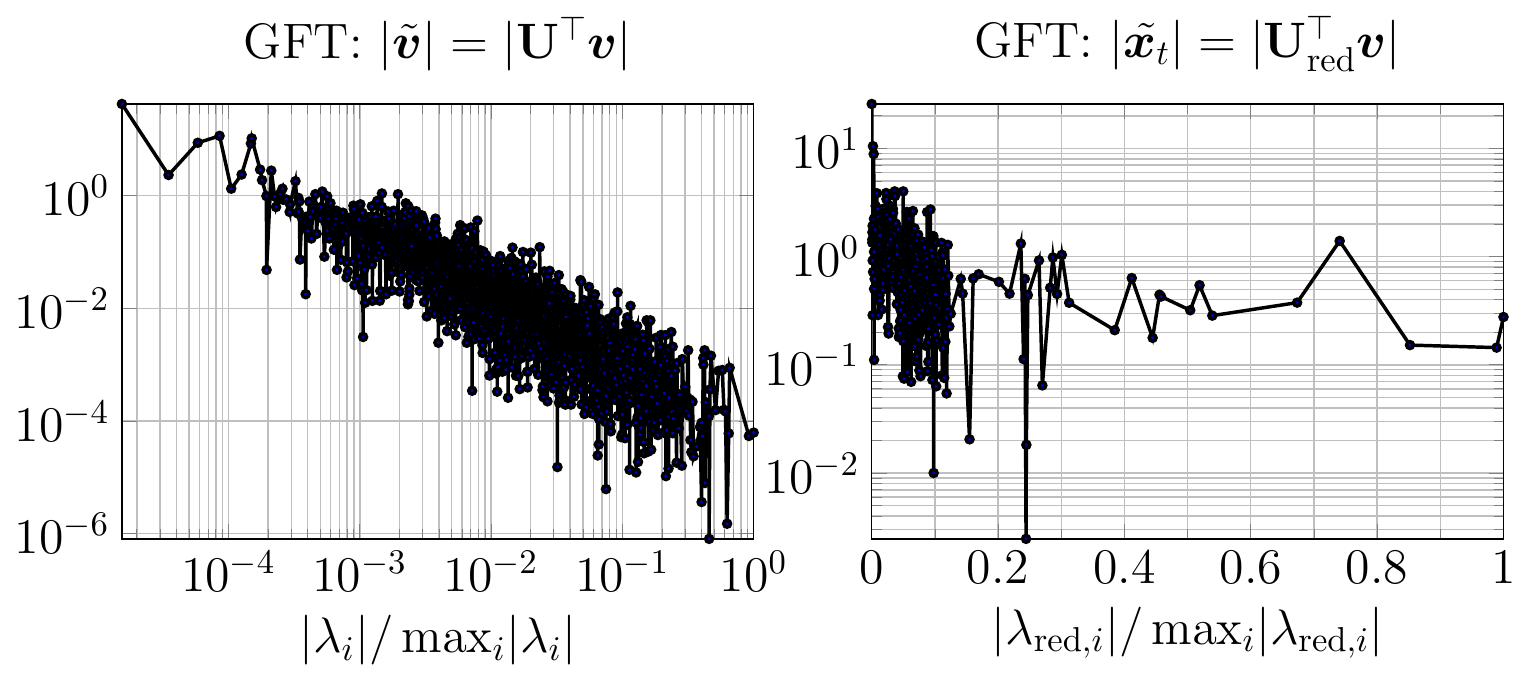}
 	\caption{Magnitude 
 	of Graph Fourier Transform (GFT) for voltage graph signal, $\abs{ \tilde{\bm{v}_{t}} }$ (left) and input to generator-only network $\abs{ \tilde{\bm{x}_{t}} }$ (right)  plotted with respect to  normalized  graph frequency. }
 	\label{fig:PMU_GFT_gen}
 \end{figure}
 \begin{figure}
 	\centering
 	\includegraphics[width=\columnwidth,height= 0.17\textheight]{./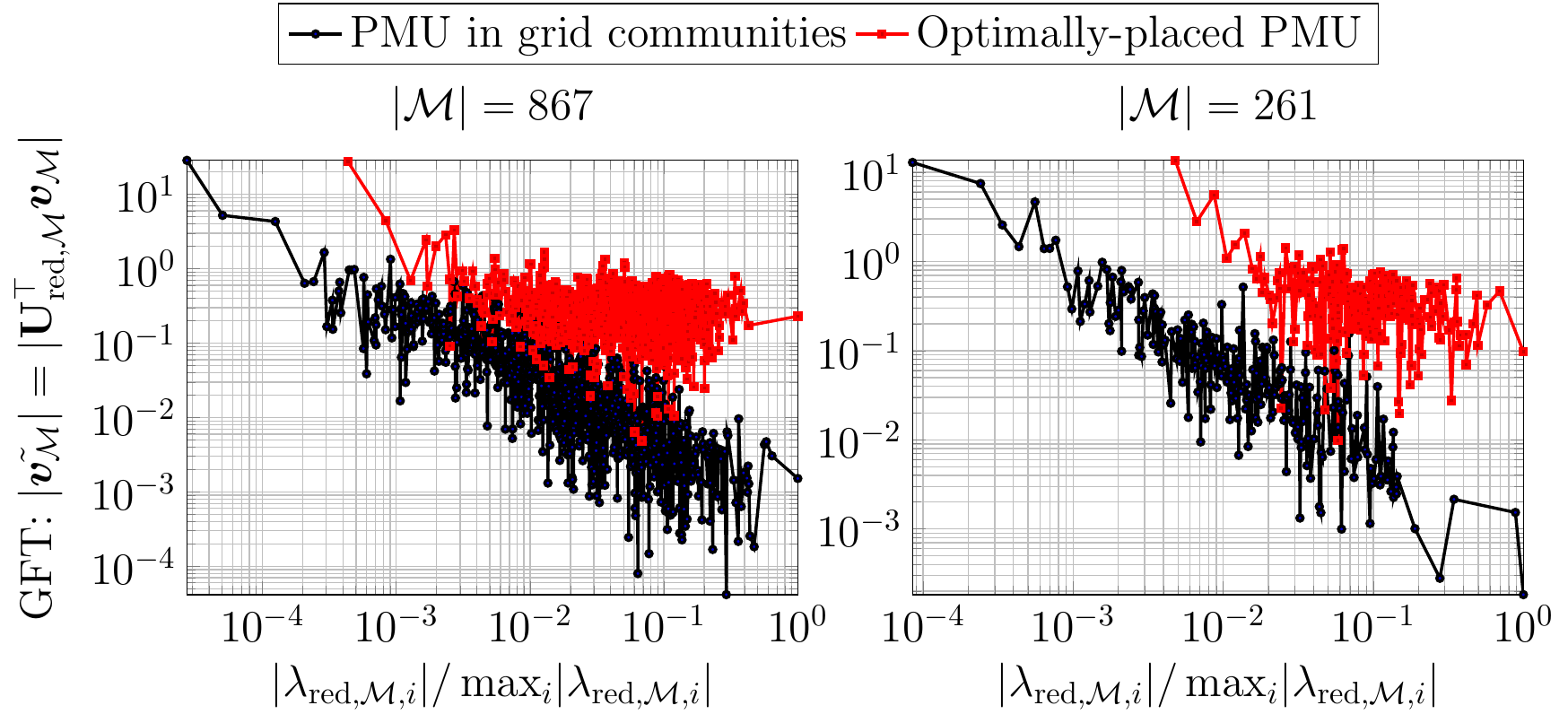}
 	\caption{Magnitude of GFT for spatially downsampled voltage graph signal, $\abs{ \tilde{\bm{v}}_{\mathcal{M}} }$  plotted with respect to normalized  graph frequency.}
 	\label{fig:PMU_GFT_downsampled}
 \end{figure}
The placement strategy has an effect on the low-pass nature of the downsampled signal. The steeper attenuation of GFT magnitude with placement strategy being community-wise is a result of loss in spatial-resolution.
\par To highlight the temporal variation in the GFT domain of input exponent, $\tilde{\bm{x}}_{t}$, a short time-series of the real and imaginary parts along with the fit of the AR model are shown in Fig. \ref{fig:tilde_x_t_time_series}.   As expected, the AR model fits well.  Fig. \ref{fig:load_dynamics_AR} shows the similar AR-2 model fit to the load current at a bus that had the highest absolute value of load. 
\begin{figure}
	\centering 
	\includegraphics[width=0.75\columnwidth]{./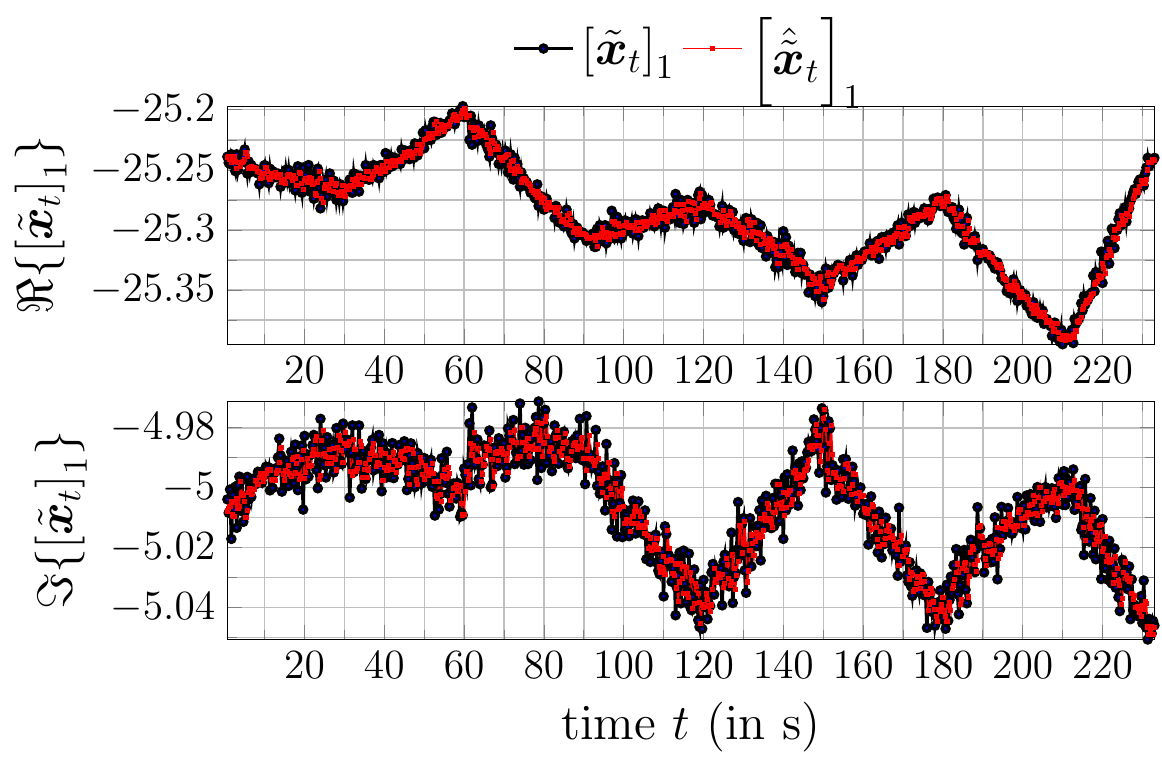}
	\caption{AR model fit to the GFT of $\bm{x}_{t}$. Component corresponding to smallest graph frequency, $\lambda_{\text{red},1}$,  $\left[\tilde{\bm{x}}_{t}\right]_1$}
	\label{fig:tilde_x_t_time_series}
\end{figure}
\begin{figure}
\centering 
\includegraphics[width=0.8\columnwidth]{./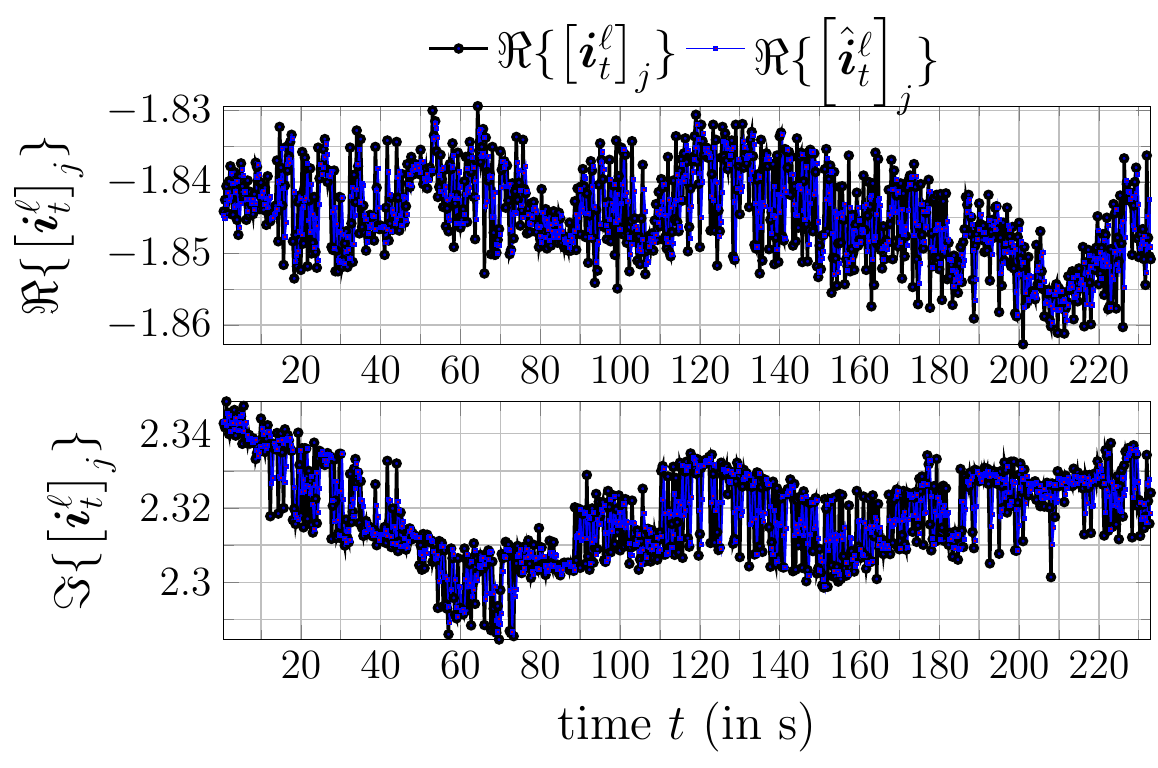}
\caption{AR model for load bus, $j=1312$, i.e. bus with highest absolute value of load.}
\label{fig:load_dynamics_AR}
\end{figure}
To emphasize the temporal nature of the input, the $2$-dimensional frequency response (in both graph and time domains) is plotted for the input $\tilde{\bm{x}}_{t}$ in Fig. \ref{fig:2D_freq}. 
The figure provides evidence of the coupling between the graph frequencies and time series Fourier power spectrum, and the variability of the temporal response depending on what GFT frequency mode is excited with Fourier spectra that are more or less concentrated towards low frequencies depending on the GFT mode. Hence, temporal dynamics can inform about what is happening in space (i.e. the trends are coupled).
\begin{figure}
	\centering
	\includegraphics[width= 0.8\columnwidth]{./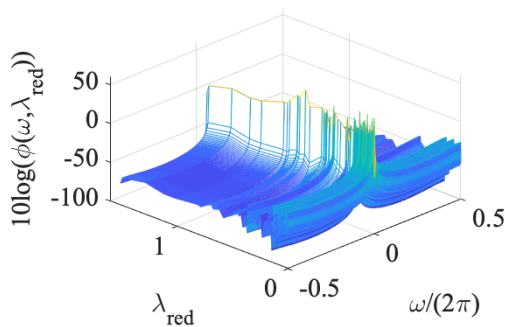}
	\caption{$2-$D frequency response for $\tilde{\bm{x}}_{t}$ }
	\label{fig:2D_freq}
\end{figure}
\begin{figure}
	\centering
	\includegraphics[width= 0.62\columnwidth]{./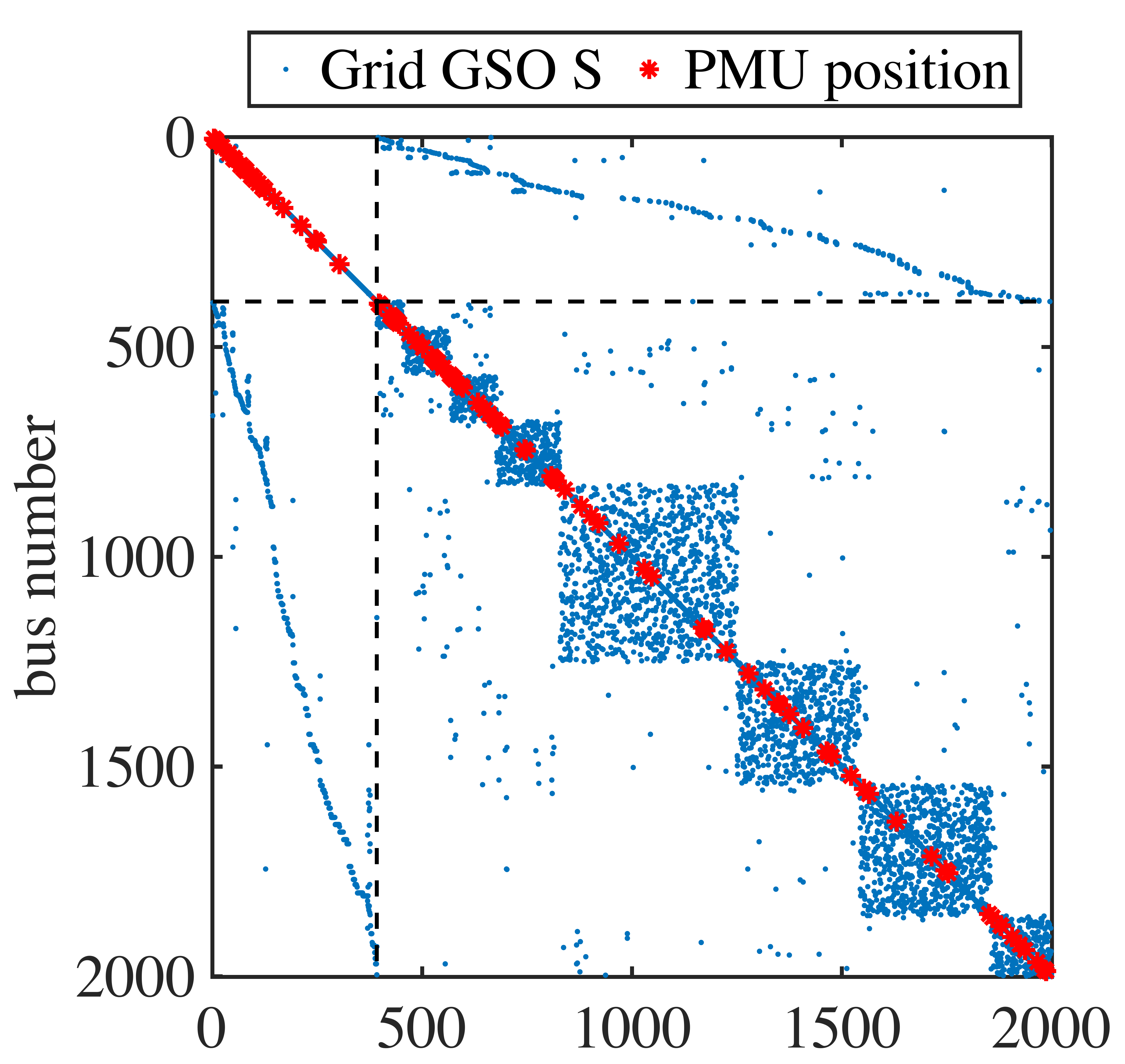}
	\caption{Optimal placement of $\abs{\mathcal{M}} = 100$ PMUs. $\abs{\mathcal{K}}=100$.  }
	\label{fig:PMU_placement_spy}
\end{figure}
\begin{figure}
	\centering
	\includegraphics[width=0.7 \columnwidth]{./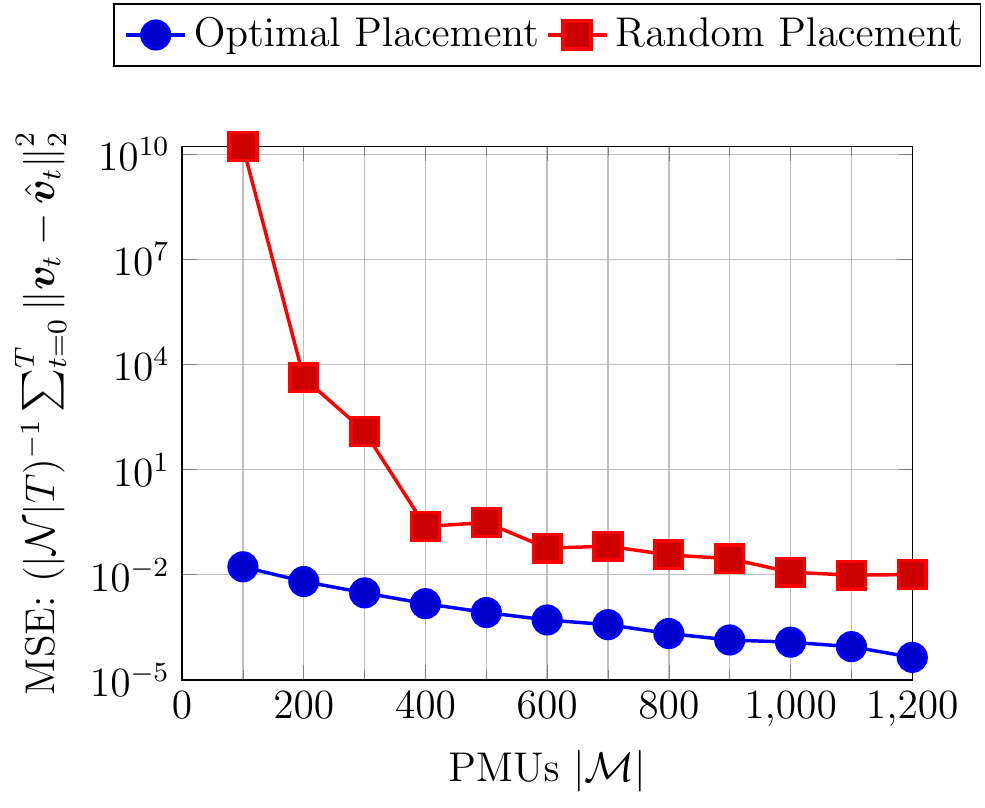}
	\caption{Reconstruction performance  after optimal placement \cite{tsitsvero2016signals} of $\abs{\mathcal{M}}$ PMUs.
	Number of frequencies used: $\abs{\mathcal{K}}=\abs{\mathcal{M}}$. For random placement, $\abs{\mathcal{K}} =100$ used. } 
	\label{fig:PMU_placement_MSE}
\end{figure}
\begin{figure*}
\centering 
\includegraphics[height=0.3\textheight]{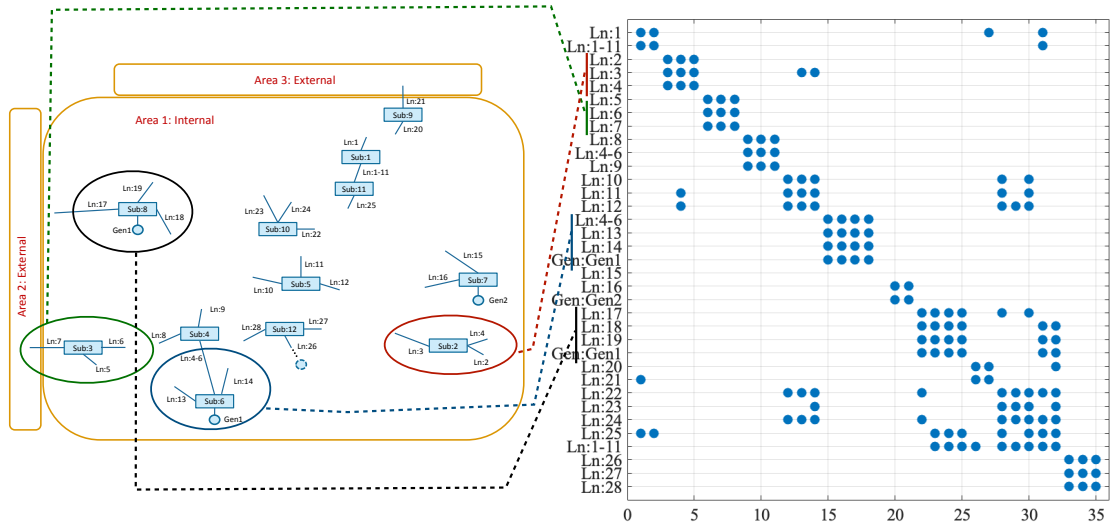}	
\caption{\response{ The map of PMUs placed in ISO-NE test case $3$ \cite{TestCase} (left) and the support of estimated GSO via \eqref{eq:nw_inference} (right) shown. Note that the community structure corresponds to groups of PMUs in the actual system as highlighted in the figure for a few clusters. }}
\label{fig:estimated_GSO}
\end{figure*}
\begin{figure}
	\centering
	\includegraphics[width= \columnwidth]{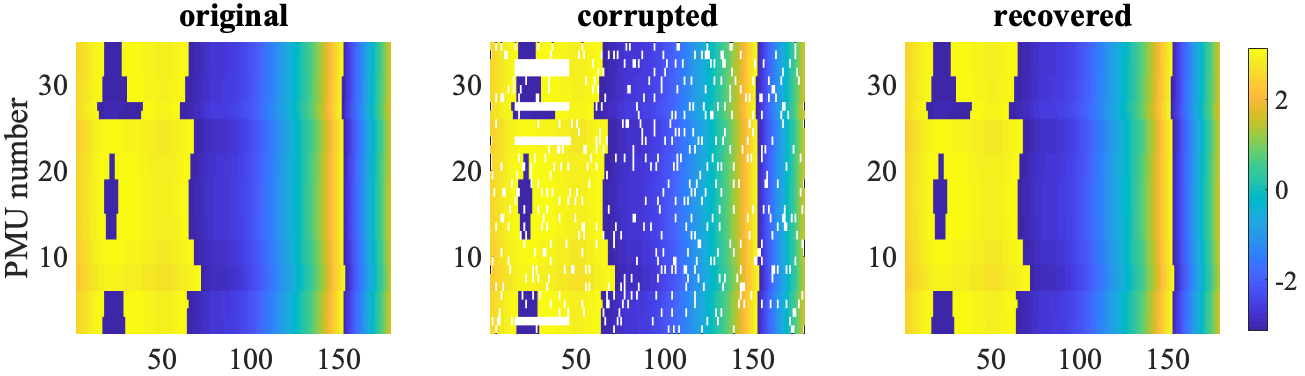}
	\caption{Interpolation of missing measurements for an ISO-NE case using GSO based regularization.
		Note the contiguous missing of samples and our ability to interpolate. The relative noise level used is, $(\abs{\mathcal{M}}T)\sigma^2/\norm{\mathbf{V}}_{F}^{2} = 10^{-4} $ Normalized MSE for this run is $6.22 \times 10^{-4}$.} 
	\label{fig:interpolation_missing_measurements}
\end{figure}
\par\underline{\it Revising GSP tools: sampling and optimal placement}
Fig.~\ref{fig:PMU_placement_spy} shows the placement of $\abs{\mathcal{M}} = 100$ PMUs super-imposed on the support of the ordered Grid-GSO, $\mathbf{S}$ when $\abs{\mathcal{K}}=100$ graph frequency components are considered. Note the distribution of PMUs to different communities as well as on the generator buses as they belong to different graph communities. 
 Fig. \ref{fig:PMU_placement_MSE} exhibits the performance of the GSP based reconstruction method  on optimally placed $\abs{\mathcal{M}}$ PMUs that provide down-sampled measurements, $\bm{v}_{\mathcal{M}}$. The number of graph frequencies considered for reconstruction are $\abs{\mathcal{K}}=\abs{\mathcal{M}}$. Even with just $5\%$ of measurements ($100$ PMUs), the reconstruction error is extremely low. 
 For random placement, $\abs{\mathcal{M}}$ PMUs are chosen at random  and $\abs{\mathcal{K}} =100$ graph frequencies are used for reconstruction. The trial of random placement is repeated $1,000$  times and the most frequently occuring error (estimate of mode of the error distribution) is plotted. As expected, the reconstruction error for random placement is orders of magnitude higher than optimal placement.
 \par To illustrate that the proposed modeling holds and algorithms work well also for real PMU data, in the next numerical experiments we used a real-world dataset of measurements from $35$ PMUs placed in ISO New-England grid (ISO-NE) \cite{TestCase}. 
 The data corresponds to a period of $180$ seconds when a large generator near Ln:2 and Ln:4 introduces oscillations in the system. We decimated in time the PMU signals down to sampling frequency $1$ sample/s.  
 
 \par \underline{\it Network inference}: As the underlying GSO is unknown, it is estimated via \eqref{eq:nw_inference} with the goal of recovering the underlying reduced-GSO. \response{Since admittance values are not given, we only compare the support of the estimated GSO with the community of PMUs in the network.
 Fig. \ref{fig:estimated_GSO} shows the support of the estimated GSO and compares it with the map of PMUs highlighting a few clusters of correspondence. From Fig. \ref{fig:estimated_GSO} we see that the block-diagonal nature of the estimated GSO captures the community structure in the map.}
 
 \par \underline{\it Interpolation of missing measurements}: Once the GSO is estimated, we consider the interpolation problem in \eqref{eq:interp_GSO} for the same ISO-NE dataset.  We delete data at random and add noise. We solve the problem in \eqref{eq:interp_GSO} to recover missing measurements. 
 In Fig.\ref{fig:interpolation_missing_measurements} we compare the original, corrupted and recovered measurements. Corrupted measurements have missing samples not just at random but also contiguous in time. The normalized MSE, $\norm{\mathbf{V}-\hat{\mathbf{V}}}_{F}^{2}/\norm{\mathbf{V}}_{F}^{2}$ is the metric used to gauge the reconstruction performance. 
As a comparison, we tested on the same data the AM-FIHT algorithm proposed in \cite{Hankel}, which regularizes the reconstruction task assuming that the Hankel matrix formed with the columns of $\mathbf{V}$, i.e. ${\mathbb{H}(\mathbf{V})}$,  has low rank $r$,  
\begin{align}
\min_{\mathbf{V}} \norm{\mathbb{P}_{\Omega}\left(\hat{\mathbf{V}}-\mathbf{V}\right)}_{F}^{2} ~~\text{subject to} ~~ \text{rank}\left({\mathbb{H}(\mathbf{V})}\right) = r
\label{eq:AM_FIHT}
\end{align}
\begin{figure}
    \centering
    \includegraphics[width=0.75\columnwidth]{./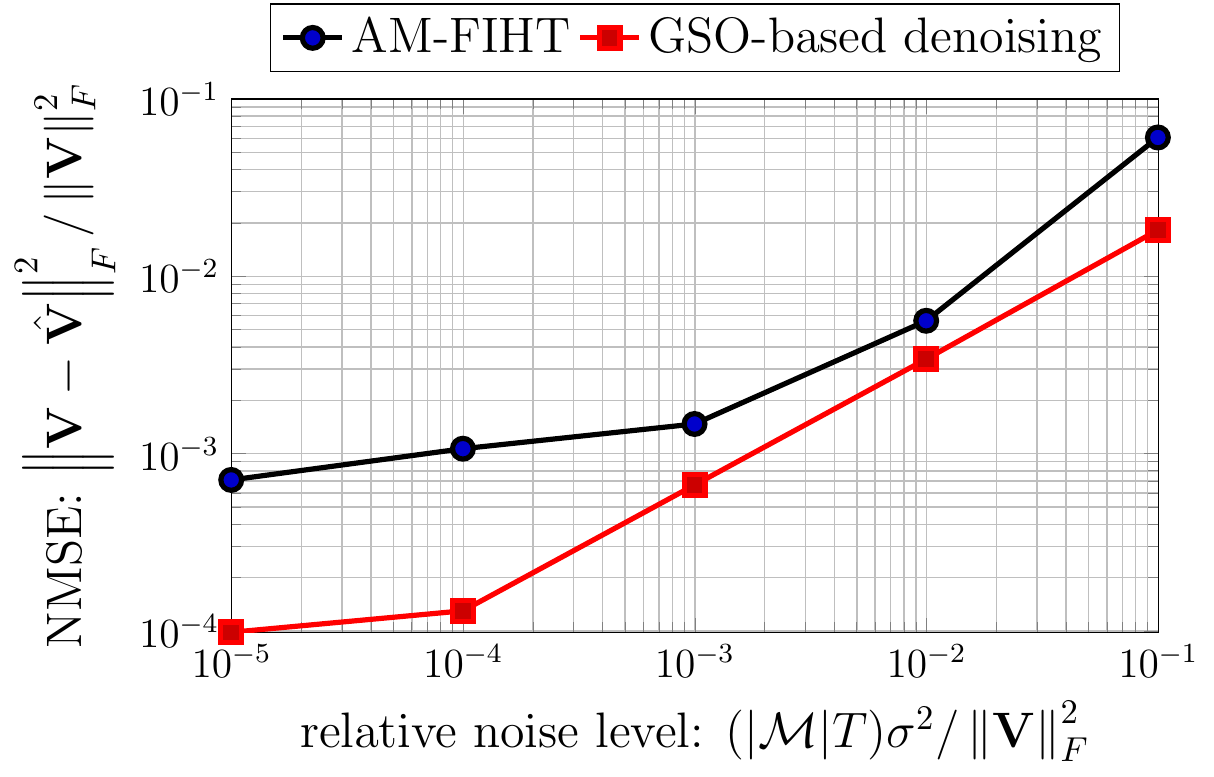}
    \caption{Comparison of AM-FIHT algorithm in \cite{Hankel} ($r=10, n1=3, \beta =0$) and the proposed GSO based interpolation with $50\%$ of missing measurements  in the ISO-NE dataset.}
    \label{fig:AM_FIHT}
\end{figure}
The plot comparing the two methods is shown in Fig. \ref{fig:AM_FIHT}. As seen, the GSO based method outperforms the AM-FIHT for this dataset, indicating that the regularization using the GSO is more effective at capturing the low-rank nature of the data, compared to seeking an arbitrary low rank structure in the the Hankel matrix of the data. 
\begin{figure}
	\centering
	\includegraphics[width=0.68\columnwidth]{./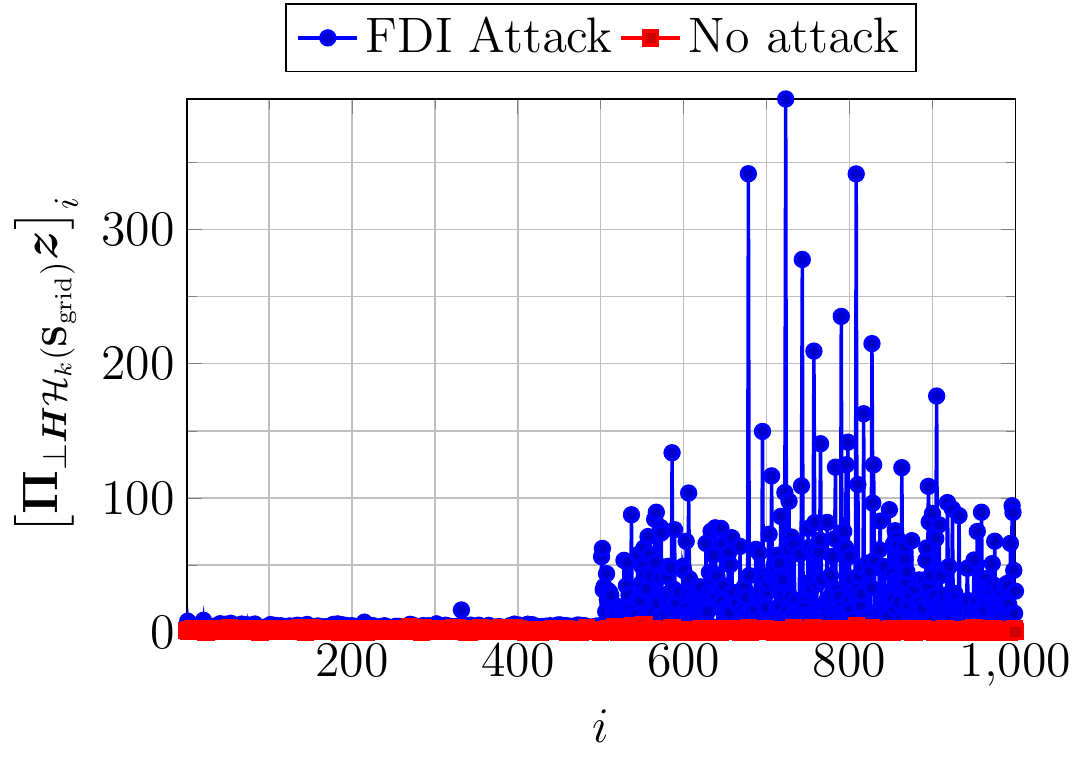}
	\caption{Components of projection of received measurement $\bm{z}$ onto the orthogonal subspace, $\bm{\Pi}_{\perp \bm{H} \mathcal{H}_{k}( \mathbf{S}) } {\bm z}   $, $  \abs{{\mathcal{A}}} = 500,  \abs{\mathcal{K}} = 200,  \abs{{\mathcal{C}}} = 250$.}
	\label{fig:FDI_freq_content}
\end{figure}
\begin{figure}
	\centering
	\includegraphics[width=0.7\columnwidth]{./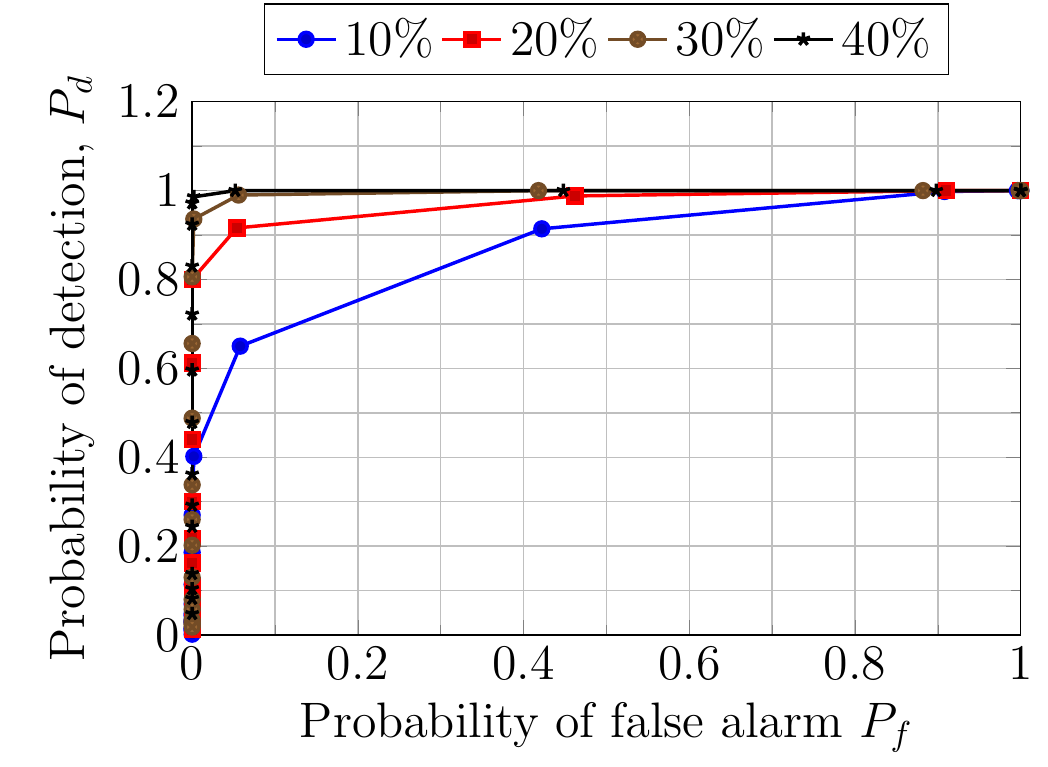} 
	\caption{ Empirical ROC curve for different $\abs{{\mathcal{C}}} $  when a percentage of them are malicious,  $\abs{{\mathcal{C}}}/\abs{{\mathcal{A}}} \times 100 $ with $\abs{{\mathcal{A}}} = 500 $ (out of $2,000$) are available.   }
	\label{fig:ROC}
\end{figure}
	\begin{figure}
		\centering
		\includegraphics[width=0.75\columnwidth]{./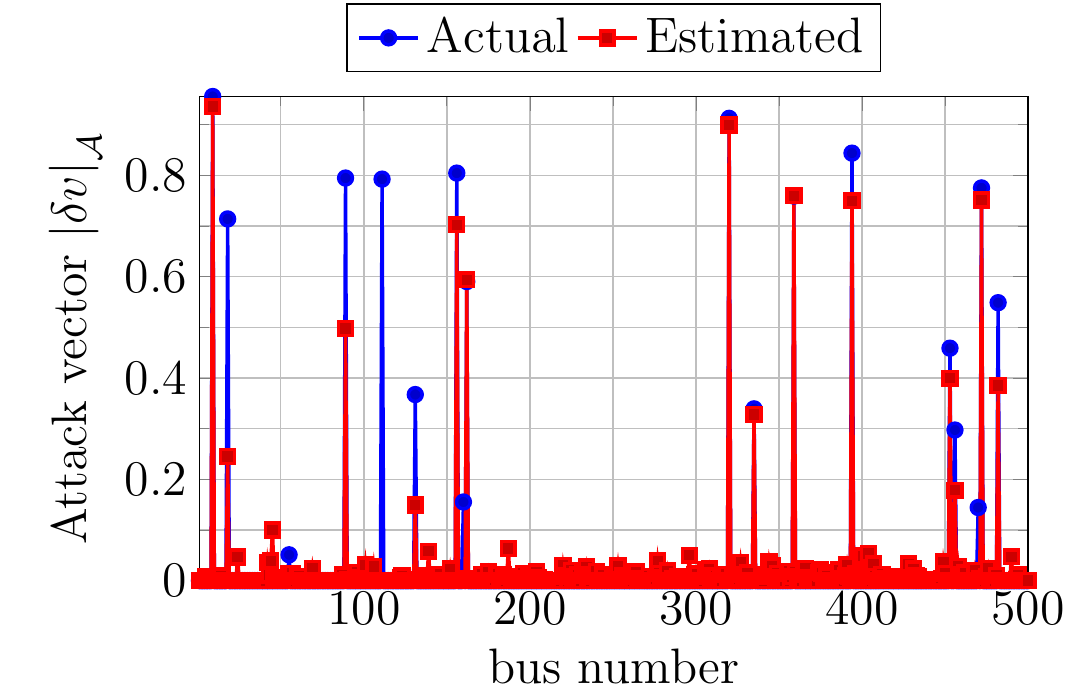} 
		\caption{ Reconstruction of attack vector.   }
		\label{fig:FDI_recons}
	\end{figure}
\begin{figure}
	\centering
	\includegraphics[width=0.8\columnwidth]{./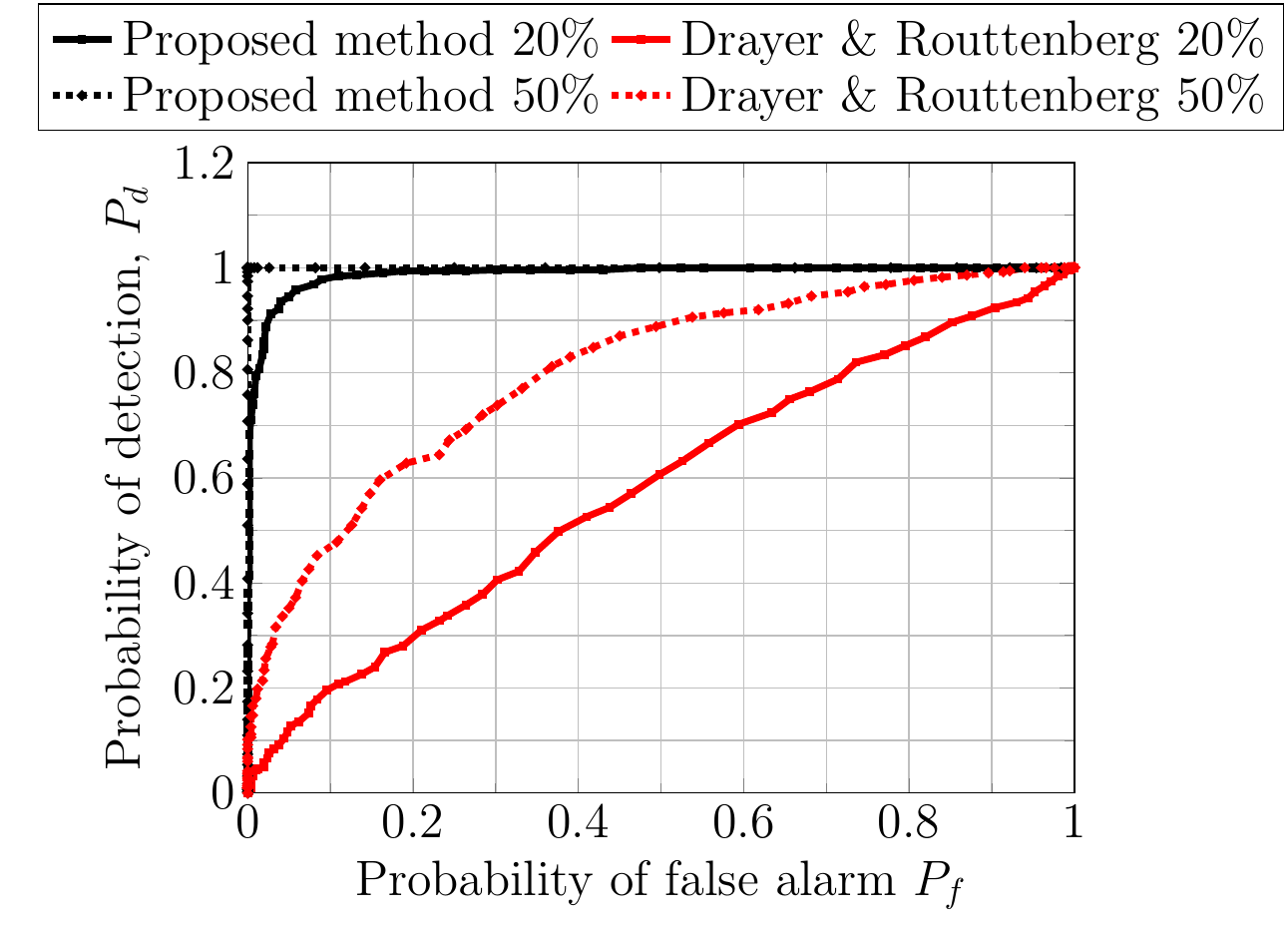} 
	\caption{Empirical ROC curve for methods proposed here and by Drayer \& Routtenberg \cite{Drayer2019} when all voltage measurements are available, $\abs{\mathcal{A}} = 2,000$. A percent of the measurements, $\abs{{\mathcal{C}}}/\abs{{\mathcal{A}}} \!\times\! 100$ are malicious. The relative noise level is $10^{-2}$.  }
	\label{fig:ROC_comparison}
\end{figure}
\par \underline{\it Detection of FDI attacks}: 
Fig.\ref{fig:FDI_freq_content} shows the magnitude of the projection of the received measurement $\bm{z}$ on the orthogonal subspace $\bm{\Pi}_{\perp \bm{H} \mathcal{H}_{k}( \mathbf{S})}$. From Fig. \ref{fig:FDI_freq_content} it is evident that when there is no attack, the magnitude of the projected component is orders of magnitude lower than when the measurements are under the FDI attack. This validates the idea of using high GFT frequency activity as an indicator of anomalies. 
Fig. \ref{fig:ROC} shows the empirical receiver operator characteristics (ROC) curve highlighting the detection performance of the proposed FDI attack detection scheme. The detection performance remains good,  even when very few buses are attacked. 
 We compare the performance of the proposed FDI attack  detection with that of the method in \cite{Drayer2019} when the full state i.e. when all voltage measurements are available, $\abs{\mathcal{A}} = 2,000$. The underlying principle  to detect the attack in \cite{Drayer2019} is to look at the magnitude of graph frequency components at higher graph frequencies which is similar in principle to the detection test we undertake. 
They use the real and imaginary parts of the system admittance matrix as $2$ GSOs, $  \Re \{\bm{Y}\} $ and $  \Im\{\bm{Y}\} $ respectively. Their test statistic is comprised of four components that are the frequency response of high-pass filtered real and imaginary voltage measurements (see Algorithm.2 in \cite{Drayer2019}). 
Fig. \ref{fig:ROC_comparison}  shows the empirical ROC curves that compare the performance of the proposed method and the one in  \cite{Drayer2019} when all voltage measurements are available and are noisy. The relative noise level used is $10^{-2}$.  As evident from the curves, the proposed method performs better than the method in \cite{Drayer2019}. This is because our test statistic is more robust to noise and also significantly more sensitive in detecting the attack vectors, even when only few buses are attacked.

Fig. \ref{fig:FDI_recons} shows the reconstruction of magnitude of the attack vector $\delta \bm{v}$ when $500$ measurements are available and number of attacked buses $\abs{\mathcal{C}} = 50$.		
\par \underline{\it Compression based results}:
\begin{figure}
	\centering
\includegraphics[width=0.7\columnwidth]{./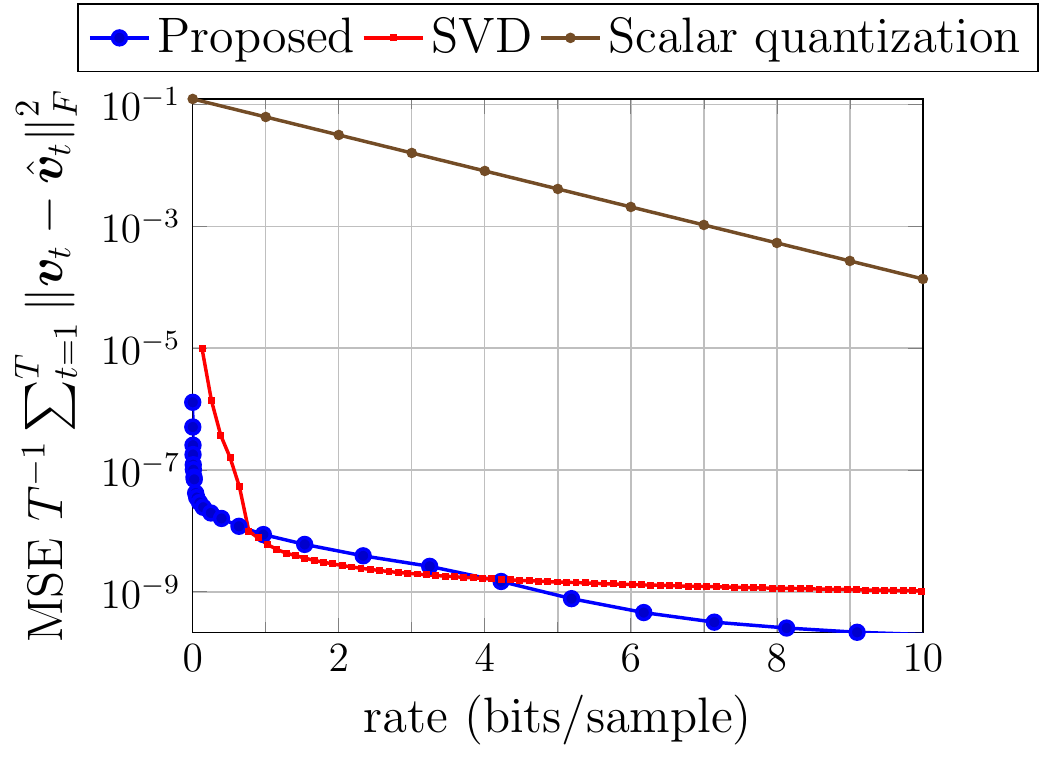}
	\caption{Empirical rate distortion (RD) curve for the proposed compression method compared with singular value thresholding and quantization. }
	\label{fig:RD_curve_full}
\end{figure}
For voltage data compression, we compared with two schemes: scalar quantization and singular value thresholding (SVT) from \cite{DataCompressionSVD}.   Fig.~\ref{fig:RD_curve_full} plots the empirical rate-distortion (RD) curve and shows the comparison between all $3$ schemes. As expected, scalar quantization does poorly compared to the other schemes. 
The SVT scheme simply uses few of the largest singular vectors for data reconstruction. Considering that it is indicative of voltage graph signal lying in a low-dimensional subspace, it is not surprising that the SVT scheme does well. However, the SVT curve rate-distortion curve eventually saturates.  Note that the performance of the proposed method are comparable to those of the SVT. However, the latter is a batch method, while the proposed method is sequential, which has important implication for the online communications of PMU data. 

\section{Conclusions}\label{sec:conclusions}
In this paper, we proposed the framework of Grid-GSP for the power grid that highlights the inherent spatio-temporal structure in the voltage phasors by employing concepts from GSP. 
 Grid-GSP revisits the concepts of sampling and reconstruction,   interpolation, network inference and applications, to detection of FDI  attacks and a lossy sequential data compression, were introduced using the lens of GSP. 
The resulting algorithms were tested on data from both synthetic and real-world datasets. The paper opens the door to leverage the GSP foundations for all types of grid data analytical tasks.  
	\bibliographystyle{IEEEtran}
	\bibliography{./Bibliography/bib_all,./Bibliography/KosutSankar_Papers,./Bibliography/KosutBibs,./Bibliography/Refs_Anamitra_2,./Bibliography/SK_AllRefs,./Bibliography/PMU_References,./Bibliography/MatrixCompletion,./Bibliography/DSW,./Bibliography/PMU,./Bibliography/bib_JSAC,./Bibliography/ref_list_graph}

\begin{thebibliography}{10}
\providecommand{\url}[1]{#1}
\csname url@samestyle\endcsname
\providecommand{\newblock}{\relax}
\providecommand{\bibinfo}[2]{#2}
\providecommand{\BIBentrySTDinterwordspacing}{\spaceskip=0pt\relax}
\providecommand{\BIBentryALTinterwordstretchfactor}{4}
\providecommand{\BIBentryALTinterwordspacing}{\spaceskip=\fontdimen2\font plus
\BIBentryALTinterwordstretchfactor\fontdimen3\font minus
  \fontdimen4\font\relax}
\providecommand{\BIBforeignlanguage}[2]{{%
\expandafter\ifx\csname l@#1\endcsname\relax
\typeout{** WARNING: IEEEtran.bst: No hyphenation pattern has been}%
\typeout{** loaded for the language `#1'. Using the pattern for}%
\typeout{** the default language instead.}%
\else
\language=\csname l@#1\endcsname
\fi
#2}}
\providecommand{\BIBdecl}{\relax}
\BIBdecl

\bibitem{DSW2019}
R.~Ramakrishna and A.~Scaglione, ``{On Modeling Voltage Phasor Measurements as
  Graph Signals},'' in \emph{{IEEE Data Science Workshop (DSW) 2019}}, June
  2019, pp. {275--279}.

\bibitem{GlobalSIP2019}
R.~Ramakrishna and A.~Scaglione, ``{Detection of False Data Injection Attack
  using Graph Signal Processing for the Power Grid},'' in \emph{{2019 IEEE
  Global Conference on Signal and Information Processing (GlobalSIP). IEEE}},
  2019.

\bibitem{gridGSP_ortega}
D.~I. Shuman, S.~K. Narang, P.~Frossard, A.~Ortega, and P.~Vandergheynst, ``The
  emerging field of signal processing on graphs: Extending high-dimensional
  data analysis to networks and other irregular domains,'' \emph{{IEEE Signal
  Processing Magazine}}, 2013.

\bibitem{GFDesign}
N.~Tremblay, P.~Gon{\c{c}}alves, and P.~Borgnat, ``Design of graph filters and
  filterbanks,'' in \emph{Cooperative and Graph Signal Processing}.\hskip 1em
  plus 0.5em minus 0.4em\relax Elsevier, 2018, pp. 299--324.

\bibitem{GSP_Moura}
A.~Sandryhaila and J.~M.~F. Moura, ``{Discrete Signal Processing on Graphs:
  Frequency Analysis},'' \emph{{IEEE Transactions on Signal Processing}},
  vol.~62, no.~12, June 2014.

\bibitem{phadke2006history}
A.~Phadke and J.~Thorp, ``{History and Applications of Phasor Measurements},''
  in \emph{{2006 IEEE PES Power Systems Conference and Exposition}}.\hskip 1em
  plus 0.5em minus 0.4em\relax IEEE, 2006, pp. 331--335.

\bibitem{GraphPowerSystem}
T.~Ishizaki, A.~Chakrabortty, and J.-I. Imura, ``{Graph-Theoretic Analysis of
  Power Systems},'' \emph{{Proceedings of the IEEE}}, vol. 106, no.~5, pp.
  931--952, May 2018.

\bibitem{Dorfler2013}
F.~D{\"o}rfler and F.~Bullo, ``{Kron Reduction of Graphs With Applications to
  Electrical Networks},'' \emph{{IEEE Transactions on Circuits and Systems-I:
  Regular Papers}}, vol.~60, no.~1, 2013.

\bibitem{kron_reduction_power_network}
F.~Dorfler and F.~Bullo, ``{Spectral Analysis of Synchronization in a Lossless
  Structure-Preserving Power Network Model},'' in \emph{{2010 First IEEE
  International Conference on Smart Grid Communications}}, 2010.

\bibitem{Xie2014}
L.~Xie, Y.~Chen, and P.~R. Kumar, ``Dimensionality reduction of synchrophasor
  data for early event detection: Linearized analysis,'' \emph{IEEE Trans.~on
  Power Systems}, vol.~29, no.~6, pp. 2784--2794, Nov 2014.

\bibitem{LowrankMissingData}
P.~Gao, M.~Wang, S.~G. Ghiocel, J.~H. Chow, B.~Fardanesh, and G.~Stefopoulos,
  ``{Missing Data Recovery by Exploiting Low-dimensionality in Power System
  Synchrophasor Measurements},'' \emph{{IEEE Transactions on Power Systems}},
  vol.~31, no.~2, pp. 1006--1013, 2016.

\bibitem{Wang2017}
M.~Wang, ``{Data quality management of synchrophasor data in power systems by
  exploiting low-dimensional models},'' in \emph{{ 2017 51st Annual Conference
  on Information Sciences and Systems (CISS)}}, 2017.

\bibitem{AnomalyDetection}
M.~Jamei, A.~Scaglione, C.~Roberts, E.~Stewart, S.~Peisert, C.~McParland, and
  A.~McEachern, ``{Anomaly Detection Using Optimally-Placed PMU Sensors in
  Distribution Grids},'' \emph{{IEEE Transactions on Power Systems}}, vol.~33,
  no.~4, pp. 3611--3622, July 2018.

\bibitem{RealTimeIdentification}
W.~Li, M.~Wang, and J.~H. Chow, ``{Real-Time Event Identification Through
  Low-Dimensional Subspace Characterization of High-Dimensional Synchrophasor
  Data},'' \emph{{IEEE Transactions on Power Systems}}, vol.~33, no.~5, Sept
  2018.

\bibitem{Kim2015}
J.~Kim, L.~Tong, and R.~J. Thomas, ``{Subspace Methods for Data Attack on State
  Estimation: A Data Driven Approach},'' \emph{{IEEE Transactions on Signal
  Processing}}, vol.~63, no.~5, pp. 1102--1114, March 2015.

\bibitem{PMUplacement2010}
P.~Du, Z.~Huang, R.~Diao, B.~Lee, and K.~K. Anderson, ``{PMU Placement for
  enhancing dynamic observability of a power grid},'' in \emph{{2010 IEEE
  Conference on Innovative Technologies for an Efficient and Reliable
  Electricity Supply}}, Sept 2010, pp. 15--21.

\bibitem{Pal2014517}
A.~Pal, G.~A. Sanchez-Ayala, V.~A. Centeno, and J.~S. Thorp, ``{A PMU Placement
  Scheme Ensuring Real-Time Monitoring of Critical Buses of the Network},''
  \emph{IEEE Transactions on Power Delivery}, vol.~29, no.~2, pp. 510--517,
  April 2014.

\bibitem{Wang2010}
Z.~Wang, A.~Scaglione, and R.~Thomas, ``{Generating Statistically Correct
  Random Topologies for Testing Smart Grid Communication and Control
  Networks},'' \emph{{IEEE Transactions on Power Systems}}, vol.~1, no.~1, pp.
  28--39, 2010.

\bibitem{li2013blind}
{Li, Xiao and Poor, H Vincent and Scaglione, Anna}, ``{Blind topology
  identification for power systems},'' in \emph{{2013 IEEE International
  Conference on Smart Grid Communications (SmartGridComm)}}.\hskip 1em plus
  0.5em minus 0.4em\relax {IEEE}, 2013, pp. {91--96}.

\bibitem{Grotas2019}
S.~{Grotas}, Y.~{Yakoby}, I.~{Gera}, and T.~{Routtenberg}, ``{Power Systems
  Topology and State Estimation by Graph Blind Source Separation},''
  \emph{{IEEE Transactions on Signal Processing}}, vol.~67, no.~8, pp.
  2036--2051, 2019.

\bibitem{Xiang2019}
Z.~{Xiang}, K.~{Huang}, W.~{Deng}, and C.~{Yang}, ``{Blind Topology
  Identification for Smart Grid Based on Dictionary Learning},'' in \emph{{2019
  IEEE Symposium Series on Computational Intelligence (SSCI)}}, 2019, pp.
  1319--1326.

\bibitem{Deka2020}
D.~{Deka}, M.~{Chertkov}, and S.~{Backhaus}, ``{Joint Estimation of Topology
  and Injection Statistics in Distribution Grids With Missing Nodes},''
  \emph{{IEEE Transactions on Control of Network Systems}}, vol.~7, no.~3, pp.
  1391--1403, 2020.

\bibitem{talukdar2020physics}
S.~Talukdar, D.~Deka, H.~Doddi, D.~Materassi, M.~Chertkov, and M.~V. Salapaka,
  ``{Physics informed topology learning in networks of linear dynamical
  systems},'' \emph{{Automatica}}, vol. 112, 2020.

\bibitem{GlobalSIP2018}
E.~Drayer and T.~Routtenberg, ``{Detection of False Data Injection Attacks in
  Power Systems with Graph Fourier Transform},'' in \emph{2018 IEEE Global
  Conference on Signal and Information Processing (GlobalSIP). IEEE, 2018},
  2018, pp. {890--894}.

\bibitem{Drayer2019}
E.~Drayer and T.~Routtenberg, ``{Detection of False Data Injection Attacks in
  Smart Grids Based on Graph Signal Processing},'' \emph{{IEEE Systems
  Journal}}, August 2019.

\bibitem{JSAC}
M.~Jamei, R.~Ramakrishna, T.~Tesfay, R.~Gentz, C.~Roberts, A.~Scaglione, and
  S.~Peisert, ``{Phasor Measurement Units Optimal Placement and Performance
  Limits for Fault Localization},'' \emph{{ IEEE Journal on Selected Areas in
  Communications }}, 2019.

\bibitem{InterAreaOsc}
L.~Fan, ``{Interarea Oscillations Revisited},'' \emph{{IEEE Transactions on
  Power Systems}}, vol.~32, no.~2, pp. {1585--1586}, 2017.

\bibitem{LocalOscHuang}
T.~Huang, N.~M. Freris, P.~R. Kumar, and L.~Xie, ``{Localization of Forced
  Oscillations in the Power Grid Under Resonance Conditions},'' in \emph{{2018
  52nd Annual Conference on Information Sciences and Systems (CISS)}}, 2018.

\bibitem{GraphLaplacian}
L.~Guo, C.~Zhao, and S.~H. Low, ``{Graph Laplacian Spectrum and Primary
  Frequency Regulation},'' in \emph{{2018 IEEE Conference on Decision and
  Control (CDC)}}, 2018, pp. {158--165}.

\bibitem{weng2013graphical}
Y.~Weng, R.~Negi, and M.~D. Ili{\'c}, ``{Graphical model for state estimation
  in electric power systems},'' in \emph{{2013 IEEE International Conference on
  Smart Grid Communications (SmartGridComm)}}.\hskip 1em plus 0.5em minus
  0.4em\relax {IEEE}, 2013, pp. {103--108}.

\bibitem{deka2019topology}
D.~Deka, M.~Chertkov, and S.~Backhaus, ``{Topology Estimation Using Graphical
  Models in Multi-Phase Power Distribution Grids},'' \emph{{IEEE Transactions
  on Power Systems}}, vol.~35, no.~3, pp. {1663--1673}, {May} 2020.

\bibitem{dvijotham2017graphical}
K.~Dvijotham, M.~Chertkov, P.~Van~Hentenryck, M.~Vuffray, and S.~Misra,
  ``{Graphical models for optimal power flow},'' \emph{{Constraints}}, vol.~22,
  no.~1, pp. {24--49}, 2017.

\bibitem{bobba2010detecting}
R.~B. Bobba, K.~M. Rogers, Q.~Wang, H.~Khurana, K.~Nahrstedt, and T.~J.
  Overbye, ``Detecting false data injection attacks on {DC} state estimation,''
  in \emph{Preprints of the First Workshop on Secure Control Systems, CPSWEEK},
  vol. 2010, 2010.

\bibitem{dan2010stealth}
G.~Dan and H.~Sandberg, ``Stealth attacks and protection schemes for state
  estimators in power systems,'' in \emph{Smart Grid Communications
  (SmartGridComm), 2010 First IEEE International Conference on}, 2010, pp.
  214--219.

\bibitem{kosut2011malicious}
O.~Kosut, L.~Jia, R.~J. Thomas, and L.~Tong, ``Malicious data attacks on the
  smart grid,'' \emph{IEEE Transactions on Smart Grid}, vol.~2, no.~4, pp.
  645--658, 2011.

\bibitem{liang2017review}
G.~Liang, J.~Zhao, F.~Luo, S.~R. Weller, and Z.~Y. Dong, ``A review of false
  data injection attacks against modern power systems,'' \emph{IEEE
  Transactions on Smart Grid}, vol.~8, no.~4, pp. 1630--1638, July 2017.

\bibitem{Esnaola2016}
I.~Esnaola, S.~M. Perlaza, H.~V. Poor, and O.~Kosut, ``{Maximum Distortion
  Attacks in Electricity Grids},'' \emph{{IEEE Transactions on Smart Grid}},
  vol.~7, no.~4, pp. 2007--2015, Jul. 2016.

\bibitem{Zhang2017}
J.~Zhang, Z.~Chu, L.~Sankar, and O.~Kosut, ``False data injection attacks on
  phasor measurements that bypass low-rank decomposition,'' in \emph{IEEE
  International Conference on Smart Grid Communications (SmartGridComm)}, Oct.
  2017.

\bibitem{he2017real}
Y.~He, G.~J. Mendis, and J.~Wei, ``Real-time detection of false data injection
  attacks in smart grid: A deep learning-based intelligent mechanism,''
  \emph{IEEE Transactions on Smart Grid}, vol.~8, no.~5, pp. 2505--2516, Sept
  2017.

\bibitem{shepard2012evaluation}
D.~P. Shepard, T.~E. Humphreys, and A.~A. Fansler, ``Evaluation of the
  vulnerability of phasor measurement units to gps spoofing attacks,''
  \emph{International Journal of Critical Infrastructure Protection}, vol.~5,
  no. 3-4, pp. 146--153, 2012.

\bibitem{heng2014reliable}
L.~Heng, J.~J. Makela, A.~D. Dominguez-Garcia, R.~B. Bobba, W.~H. Sanders, and
  G.~X. Gao, ``Reliable {GPS}-based timing for power systems: A multi-layered
  multi-receiver architecture,'' in \emph{Power and Energy Conference at
  Illinois (PECI), 2014}, 2014, pp. 1--7.

\bibitem{Gao2016_PMUIdentification}
P.~Gao, M.~Wang, J.~H. Chow, S.~G. Ghiocel, B.~Fardanesh, G.~Stefopoulos, and
  M.~P. Razanousky, ``Identification of successive {"}unobservable {"} cyber
  data attacks in power systems through matrix decomposition,'' \emph{IEEE
  Transactions on Signal Processing}, vol.~64, no.~21, pp. 5557--5570, Nov
  2016.

\bibitem{Top2013}
P.~Top and J.~Breneman, ``Compressing phasor measurement data,'' in \emph{2013
  IEEE Power Energy Society General Meeting}, July 2013, pp. 1--4.

\bibitem{Klump2010}
R.~Klump, P.~Agarwal, J.~E. Tate, and H.~Khurana, ``Lossless compression of
  synchronized phasor measurements,'' in \emph{IEEE PES General Meeting}, July
  2010, pp. 1--7.

\bibitem{Tate2016}
J.~E. Tate, ``Preprocessing and {Golomb-Rice} encoding for lossless compression
  of phasor angle data,'' \emph{IEEE Trans.~on Smart Grid}, vol.~7, no.~2, pp.
  718--729, March 2016.

\bibitem{KirtiHICCS}
S.~Kirti, Z.~Wang, A.~Scaglione, and R.~Thomas, ``{On the Communication
  Architecture for Wide-Area Real-Time Monitoring in Power Networks},'' in
  \emph{{2007 40th Annual Hawaii International Conference on System Sciences
  (HICSS'07)}}, 2007.

\bibitem{Gadde2016}
P.~H. Gadde, M.~Biswal, S.~Brahma, and H.~Cao, ``Efficient compression of {PMU}
  data in {WAMS},'' \emph{IEEE Trans.~on Smart Grid}, vol.~7, no.~5, pp.
  2406--2413, Sept 2016.

\bibitem{Ge2015}
Y.~Ge, A.~J. Flueck, D.~K. Kim, J.~B. Ahn, J.~D. Lee, and D.~Y. Kwon, ``Power
  system real-time event detection and associated data archival reduction based
  on synchrophasors,'' \emph{IEEE Trans.~on Smart Grid}, vol.~6, no.~4, pp.
  2088--2097, July 2015.

\bibitem{DataCompressionSVD}
J.~C.~S. de~Souza, T.~M.~L. Assis, and B.~C. Pal, ``{Data Compression in Smart
  Distribution Systems via Singular Value Decomposition},'' \emph{{IEEE
  Transactions on Smart Grid}}, vol.~8, no.~1, January 2017.

\bibitem{Mehra2013}
R.~Mehra, V.~Patel, F.~Kazi, N.~M. Singh, and S.~R. Wagh, ``Modes preserving
  wavelet based multi-scale {PCA} algorithm for compression of smart grid
  data,'' in \emph{2013 International Conference on Advances in Computing,
  Communications and Informatics (ICACCI)}, Aug 2013, pp. 817--821.

\bibitem{isufi2017autoregressive}
E.~Isufi, A.~Loukas, A.~Simonetto, and G.~Leus, ``{Autoregressive Moving
  Average Graph Filtering},'' \emph{{IEEE Transactions on Signal Processing}},
  vol.~65, no.~2, pp. 274--288, 2017.

\bibitem{Ramakrishna2020}
R.~{Ramakrishna}, H.~T. {Wai}, and A.~{Scaglione}, ``{A User Guide to Low-Pass
  Graph Signal Processing and Its Applications: Tools and Applications},''
  \emph{{IEEE Signal Processing Magazine}}, vol.~37, no.~6, pp. 74--85,
  November 2020.

\bibitem{sandryhaila2013discrete}
A.~Sandryhaila and J.~M. Moura, ``{Discrete Signal Processing on Graphs},''
  \emph{IEEE Transactions on Signal Processing}, vol.~61, no.~7, pp.
  1644--1656, 2013.

\bibitem{Matrix_analysis}
R.~A. Horn and C.~R. Johnson, \emph{Matrix Analysis}.\hskip 1em plus 0.5em
  minus 0.4em\relax Cambridge University Press, 1990.

\bibitem{mallat2008wavelet}
S.~Mallat, \emph{{A Wavelet Tour of Signal Processing: The Sparse Way}}.\hskip
  1em plus 0.5em minus 0.4em\relax Academic press, 2009.

\bibitem{Directed_graph_laplacian}
R.~Singh, A.~Chakraborty, and B.~Manoj, ``{Graph Fourier transform based on
  direcetd Laplacian},'' in \emph{{2016 International Conference on Signal
  Processing and Communications (SPCOM)}}, 2016.

\bibitem{isufibanelli2016}
E.~{Isufi}, G.~{Leus}, and P.~{Banelli}, ``{2-Dimensional finite impulse
  response graph-temporal filters},'' in \emph{{2016 IEEE Global Conference on
  Signal and Information Processing (GlobalSIP)}}, 2016.

\bibitem{isufi2017filtering}
E.~Isufi, A.~Loukas, A.~Simonetto, and G.~Leus, ``{Filtering Random Graph
  Processes Over Random Time-Varying Graphs},'' \emph{{IEEE Transactions on
  Signal Processing}}, vol.~65, no.~16, August 2017.

\bibitem{JFT}
F.~Grassi, A.~Loukas, N.~Perraudin, and B.~Ricaud, ``{A Time-Vertex Signal
  Processing Framework: Scalable Processing and Meaningful Representations for
  Time-Series on Graphs},'' \emph{{IEEE Transactions on Signal Processing}},
  vol.~66, no.~3, pp. 817--829, February 2018.

\bibitem{isufi2016separable}
E.~Isufi, A.~Loukas, A.~Simonetto, and G.~Leus, ``{Separable autoregressive
  moving average graph-temporal filters},'' in \emph{{2016 24th European Signal
  Processing Conference (EUSIPCO)}}.\hskip 1em plus 0.5em minus 0.4em\relax
  IEEE, 2016, pp. 200--204.

\bibitem{GloverSarmaOverbye:Book}
J.~D. Glover, M.~S. Sarma, and T.~J. Overbye, \emph{{Power system analysis and
  design}}.\hskip 1em plus 0.5em minus 0.4em\relax Cengage Learning, 2008.

\bibitem{decell1965application}
H.~P. Decell, Jr, ``An application of the cayley-hamilton theorem to
  generalized matrix inversion,'' \emph{{SIAM Review}}, vol.~7, no.~4, pp.
  526--528, 1965.

\bibitem{sato1963techniques}
N.~{Sato} and W.~F. {Tinney}, ``{Techniques for Exploiting the Sparsity or the
  Network Admittance Matrix},'' \emph{{IEEE Transactions on Power Apparatus and
  Systems}}, vol.~82, no.~69, pp. 944--950, 1963.

\bibitem{ZhuFreqResp}
P.~Huynh, H.~Zhu, Q.~Chen, and A.~E.Elbanna, ``{Data-Driven Estimation of
  Frequency Response From Ambient Synchrophasor Measurements},'' \emph{{IEEE
  Transactions on Power Systems}}, vol.~33, no.~6, 2018.

\bibitem{Paganini2020}
F.~{Paganini} and E.~{Mallada}, ``{Global Analysis of Synchronization
  Performance for Power Systems: Bridging the Theory-Practice Gap},''
  \emph{{IEEE Transactions on Automatic Control}}, vol.~65, no.~7, pp.
  3007--3022, 2020.

\bibitem{gao2012dynamic}
F.~Gao, J.~S. Thorp, A.~Pal, and S.~Gao, ``{Dynamic state prediction based on
  Auto-Regressive (AR) Model using PMU data},'' in \emph{{2012 IEEE Power and
  Energy Conference at Illinois}}.\hskip 1em plus 0.5em minus 0.4em\relax IEEE,
  2012, pp. 1--5.

\bibitem{Golub&VanLoan:book}
G.~Golub and C.~V. Loan, \emph{{Matrix Computations}}.\hskip 1em plus 0.5em
  minus 0.4em\relax The Johns Hopkins University Press, 1990.

\bibitem{Wang2015}
M.~Wang, J.~H. Chow, P.~Gao, X.~T. Jiang, Y.~Xia, S.~G. Ghiocel, B.~Fardanesh,
  G.~Stefopolous, Y.~Kokai, N.~Saito, and M.~Razanousky, ``A low-rank matrix
  approach for the analysis of large amounts of power system synchrophasor
  data,'' in \emph{2015 48th Hawaii International Conference on System
  Sciences}, Jan 2015, pp. 2637--2644.

\bibitem{Dahal2012}
N.~Dahal, R.~L. King, and V.~Madani, ``Online dimension reduction of
  synchrophasor data,'' in \emph{{IEEE PES Transmission and Distribution
  Conference and Exposition (T \& D)}}, 2012.

\bibitem{ADMM_PMU}
M.~Liao, D.~Shi, Z.~Yu, Z.~Yi, Z.~Wang, and Y.~Xiang, ``{An Alternating
  Direction Method of Multipliers Based Approach for PMU Data Recovery},''
  \emph{{IEEE Transactions on Smart Grid}}, 2018.

\bibitem{tsitsvero2016signals}
M.~Tsitsvero, S.~Barbarossa, and P.~Di~Lorenzo, ``{Signals on Graphs:
  Uncertainty Principle and Sampling},'' \emph{{IEEE Transactions on Signal
  Processing}}, vol.~64, no.~18, pp. 4845--4860, 2016.

\bibitem{Anis2016}
A.~Anis, A.~Gadde, and A.~Ortega, ``{Efficient Sampling Set Selection for
  Bandlimited Graph Signals Using Graph Spectral Proxies},'' \emph{{IEEE
  Transactions on Signal Processing}}.

\bibitem{HagenKahn1992}
L.~Hagen and A.~B. Kahng, ``{New Spectral Methods for Ratio Cut Partitioning
  and Clustering},'' \emph{{IEEE Transactions on Computer-Aided Design of
  Integrated Circuits and Systems}}, vol.~11, no.~9, pp. {1074--1085}, 1992.

\bibitem{spectral_clustering_tutorial}
U.~Von~Luxburg, ``A tutorial on spectral clustering,'' \emph{Statistics and
  computing, Springer}, vol.~17, no.~4, pp. 395--416, 2007.

\bibitem{dong2019learning}
X.~Dong, D.~Thanou, M.~Rabbat, and P.~Frossard, ``Learning graphs from data: A
  signal representation perspective,'' \emph{IEEE Signal Processing Magazine},
  vol.~36, no.~3, pp. 44--63, 2019.

\bibitem{diff_enc}
W.~Weber, ``{Differential Encoding for Multiple Amplitude and Phase Shift
  Keying Systems},'' \emph{{IEEE Transactions on Communication}}, vol.~26,
  no.~3, 1978.

\bibitem{CoverThomas:Book}
T.~M. Cover and J.~Thomas, \emph{{Elements of Information Theory}}.\hskip 1em
  plus 0.5em minus 0.4em\relax John Wiley, 1991.

\bibitem{BirchfieldACTIVSg}
A.~B. Birchfield, T.~Xu, K.~M. Gegner, K.~S. Shetye, and T.~J. Overbye, ``{Grid
  Structural Characteristics as Validation Criteria for Synthetic Networks},''
  \emph{{IEEE Transactions on Power Systems}}, vol.~32, no.~4, pp. 3258--3265,
  July 2017.

\bibitem{TestCase}
S.~Maslennikov, B.~Wang, Q.~Zhang, F.~Ma, X.~Luo, K.~Sun, and E.~Litvinov, ``{A
  Test Cases Library for Methods Locating the Sources of Sustained
  Oscillations},'' in \emph{{IEEE PES General Meeting, Boston, MA}}, 2016.

\bibitem{Hankel}
S.~Zhang, Y.~Hao, M.~Wang, and J.~H. Chow, ``{Multi-Channel Hankel Matrix
  Completion through Nonconvex Optimization},'' \emph{{IEEE Journal of Selected
  Topics in Signal Processing}}, vol.~12, no.~4, pp. 617--632, 2018.

\end{thebibliography}
		
			\end{document}